\title{On the optimality of Shapley value mechanism for funding public excludable goods under Sybil strategies}
\author{Bruno Mazorra \\ Universitat Pompeu Fabra \\ brunomazorra@gmail.com}
\newcommand{\1}{\textbf{1}^T}
\newtcolorbox{mybox}[1][]{enhanced jigsaw,breakable,pad at break=1mm,
  oversize,left=8mm,interior hidden,colframe=black,nobeforeafter=,#1}
\newtcolorbox{mybox2}[2][]{%
  attach boxed title to top center
               = {yshift=-8pt},
  colback      = pink!5!white,  
  colframe     = pink!30!black, 
  fonttitle    = \bfseries,
  colbacktitle = pink!50!black, 
  title        = #2,#1,
  enhanced,
}
\newtheoremstyle{normalfonttheorem} 
  {} 
  {} 
  {\normalfont} 
  {} 
  {\bfseries} 
  {.} 
  {5pt plus 1pt minus 1pt} 
  {} 
\theoremstyle{normalfonttheorem}
\newtheorem{theorem}{Theorem}[section]
\newtheorem{lemma}[theorem]{Lemma}
\newtheorem{corollary}[theorem]{Corollary}
\newtheorem{proposition}[theorem]{Proposition}
\newtheorem{definition}[theorem]{Definition}
\newtheorem{observation}[theorem]{Observation}
\newcommand{\x}{\textbf{x}}
\newcommand{\p}{\textbf{p}}
\newcommand{\bb}{\textbf{b}}
\newcommand{\Rr}{\mathbb{R}}
\newcommand{\keyword}[1]{\hspace{0.5cm}\textbf{Keywords}: #1}
\renewcommand{\sectionmark}[1]{}
\renewcommand{\subsectionmark}[1]{}
\begin{document}
\begin{titlepage}
    
\pagestyle{fancy}
\fancyhf{} 
\maketitle
\begin{abstract}
In the realm of cost-sharing mechanisms, the vulnerability to Sybil strategies —also known as false-name strategies, where agents create fake identities to manipulate outcomes— has not yet been studied. In this paper, we delve into the details of different cost-sharing mechanisms proposed in the literature, highlighting their non-Sybil-resistant nature. Furthermore, we prove no deterministic, anonymous, truthful, Sybil-proof, upper semicontinuous, and individually rational cost-sharing mechanism for public excludable goods is better than $\Omega(n)-$approximate. This finding reveals an exponential increase in the worst-case social cost in environments where agents are restricted from using Sybil strategies. To circumvent these negative results, we introduce the concept of \textit{Sybil Welfare Invariant} mechanisms, where a mechanism does not decrease its welfare under Sybil strategies when agents choose weak dominant strategies and have subjective prior beliefs over other players' actions. Finally, we prove that the Shapley value mechanism for symmetric and submodular cost functions holds this property, and so deduce that the worst-case social cost of this mechanism is the $n$th harmonic number $\mathcal H_n$ under equilibrium with Sybil strategies, matching the worst-case social cost bound for cost-sharing mechanisms. This finding suggests that any group of agents, each with private valuations, can fund public excludable goods both permissionless and anonymously, achieving efficiency comparable to that of non-anonymous domains, even when the total number of participants is unknown.
\end{abstract}
\keyword{Permisionless Mechanism design, Cost sharing, Sybil-proof}
\newpage
\tableofcontents
\end{titlepage}
\setcounter{page}{1}
\pagenumbering{arabic}
\section{Introduction}

Imagine a scenario where a set of agents, want to fund a public excludable good, usually known as a club good. This could be anything from a community-funded park with an entrance fee, a digital platform available only to subscribers, a network-attached storage, or deploying and maintaining a smart contract in a Blockchain where state rental is implemented. The central question is: how do these individuals, or agents with private valuations, collaboratively finance such a good in a setting that is permissionless, maintaining the anonymity of the participants? Moreover, what if we want to recover the costs of the public good? This paper explores suitable mechanisms for this task. In permissioned settings, existing mechanisms are studied in the cost-sharing mechanisms literature, like the Shapley value mechanism that offers a solution that recovers the cost of the public good, and is approximately efficient in terms of social cost. But, how do these mechanisms fare in environments where agents operate under pseudonyms? Here, the challenge intensifies as agents might exploit this anonymity to create multiple false identities (known as Sybil or false-name strategies), influencing the outcome to their advantage. This raises two critical questions that we will answer in this paper: 
\begin{enumerate}
    \item Are there mechanisms robust against such strategies while still maintaining efficiency?
    \item What are the implications in terms of efficiency when agents use Sybil strategies in equilibrium?
\end{enumerate} 
Formally, a mechanism for public excludable goods, usually named cost-sharing mechanism can be conceptualized as involving a set \( [n]=\{1,...,n\}\) of players and a cost function \( C: 2^{[n]} \rightarrow \mathbb{R}_+ \). This function models the cost incurred by deploying the public good as a function of the set of agents that have access to it.  Each player \( i \) in this set has a private, non-negative value \( v_i \) for winning, reflecting their valuation for having access to the good or service in question.
In the realm of public excludable goods, the problem is two-fold: determining whether to finance a public good and, if so, identifying the users who are granted access and how much the users have to pay. In the first part of this work, we will focus on the public excludable good problem that is represented by a cost function \( C \) where \( C(\emptyset) = 0 \) and \( C(S) = 1 \) for every non-empty subset \( S \) of players, and in the second part of the paper, we will focus on monotone symmetric submodular cost functions.

The problem of finding, individually rational, truthful, and optimal worst-case social cost is solved in the literature for public excludable goods. However, the advent of the digital age introduces additional complexities, notably the issue of identity misrepresentation. Players capable of creating fake identities or bids can potentially exploit these mechanisms. This phenomenon has been explored in two primary strands of literature: ``Sybil attacks'' and false-name proof mechanisms.

Our research addresses the intersection of these identity dynamics with cost-sharing mechanisms for public excludable goods. We reveal the susceptibility of conventional mechanisms, such as the Shapley value mechanism, to Sybil attacks and establish constraints on social costs for mechanisms satisfying properties like Sybil-proofness.

Building upon these insights, we introduce a comprehensive framework for analyzing cost-sharing mechanisms in the context of public excludable goods, particularly focusing on Sybil strategies and scenarios with an indeterminate number of agents. We introduce the novel concept of Sybil Welfare Invariant mechanisms, distinguishing them from Sybil-proof mechanisms. These mechanisms maintain economic efficiency outcomes, even when agents deploy Sybil or false-name strategies, ensuring robustness against deceptive behaviors by preserving welfare outcomes, irrespective of the number of participants, real or fictitious.

As we delve deeper into our analysis, a notable finding is that the Shapley value cost-sharing mechanism, despite its vulnerabilities, adheres to this property, ensuring that its welfare outcomes are \(\mathcal H_n\)-approximated, thus establishing an upper bound on the worst-case social cost.

We  make several key contributions to the field:
\begin{itemize}
    \item We introduce a framework to analyse cost-sharing mechanisms for public excludable goods with an unknown number of agents and Sybil strategies.
    \item We prove that the cost-sharing mechanism for public excludable goods with constant cost functions such as the Shapley value mechanism, the VCG mechanism for public excludable goods, and the potential mechanism are not truthful under Sybil strategies (i.e. are not Sybil-proof). 
    \item Moreover, in theorem \ref{theorem:lower} we prove that deterministic cost-sharing mechanisms with constant cost functions that are anonymous, individually rational, upper semicontinuous, and Sybil-proof have worst-case welfare social cost $\Omega(n)$. Moreover, if the mechanism has no-deficit, it holds that worst-case welfare social cost is greater than $(n+1)/2$.
    \item Finally, we introduce Sybil welfare invariant mechanisms, and we prove that the Shapley value mechanism for non-decreasing symmetric and submodular cost-functions holds this property. 
\end{itemize}
Conclusively, our research highlights a significant application of the Shapley value mechanism in the context of decentralized systems like Peer-to-Peer (P2P) Networks and Decentralized Finance (DeFi) platforms. We establish that, despite uncertainties in the number of participating agents, the Shapley value mechanism demonstrates consistent worst-case welfare outcomes. This finding is particularly relevant for decentralized autonomous organizations (DAOs) considering the deployment of public excludable goods.
\subsection{Organization of the paper}
The paper will be organized as follows. In Section \ref{preliminars}, we will introduce tools from mechanism design, cost-sharing mechanism, weak dominant strategy sets and, Bayesian games with private beliefs. In Section \ref{spcost}, we will define an anonymous mechanism and the Sybil extension of single-parameter mechanisms. Moreover, we will prove that some of the most important cost-sharing mechanisms for public goods are not Sybil-proof. Moreover, we generalize it by computing the worst-case welfare of Sybil-Proof mechanism for public excludable goods under some mild conditions. In Section \ref{section:SWI}, we introduce the concept of Sybil welfare invariant mechanisms, and we prove that the Shapley value mechanism holds this property. Section \ref{section:conclusions}, we present the conclusions of our study and propose directions for future work. Finally, the appendix contains some proofs the results stated in the paper and the notation used.

\subsection{Related work}
This paper explores the nuanced domain of cost-sharing mechanisms of public excludable goods, particularly emphasizing their non-Sybil proofness guarantees, also known in the literature as false-name proofness. The exploration into this area is rooted in the fundamental work on cost-sharing for public goods and services, where the objective is to allocate costs efficiently among participants. A landmark contribution in this field was made by Moulin and Shenker \cite{moulin2001strategyproof}, who discussed the strategy-proof sharing of submodular costs, with budget-balance constraints. 
This work, alongside Myerson's seminal study \cite{myerson1981optimal} on optimal auction design, forms the bedrock of our understanding of mechanisms that incentivize truthful behavior in cost-sharing scenarios. 
The seminal work of Moulin and Shenker through the Shapley value mechanism \cite{moulin1999incremental,moulin2001strategyproof} initiated a rich vein of research into the efficiency loss of budget-balanced cost-sharing mechanisms. Subsequent studies by Feigenbaum et al. \cite{feigenbaum2003hardness} and Roughgarden et al. \cite{roughgarden2009quantifying}, along with others \cite{bleischwitz2008group,brenner2007cost,chawla2006optimal,gupta2015efficient}, expanded the understanding of these mechanisms under various constraints and objectives. More aligned with our paper, in \cite{dobzinski2008shapley}, the authors proved that no deterministic and budget-balanced cost-sharing mechanism for public excludable good problems that satisfies equal treatment is better than $\mathcal H_n$-approximate.

Regarding Sybil attacks \cite{douceur2002sybil}, where a single malicious entity creates multiple fake identities, pose a significant threat across different domains, from peer-to-peer networks \cite{dinger2006defending,so2011defending} and online social networks \cite{yu2006sybilguard,yu2008sybillimit} to blockchain systems \cite{zhang2019double}, each facing unique challenges due to these attacks.

In the realm of game theory and auction theory, these attacks translate into false-name strategies or shill bids, a concept thoroughly explored in literature. Pioneering work by Yokoo et al. \cite{yokoo2001robust,yokoo2004effect} exposed the vulnerability of Vickrey–Clarke–Groves (VCG) mechanisms to such strategies, marking a significant milestone in understanding the intricacies of auction systems under false-name bids. This line of research was furthered by studies on the efficiency of Sybil-proof combinatorial auction mechanisms \cite{iwasaki2010worst} and the strategic dynamics of shill bidding \cite{sher2012optimal}.

Parallel to these advancements, research in non-monetary mechanisms and voting systems, such as the facility location problem \cite{todo2011false} and voting rules with costs \cite{wagman2008optimal,fioravanti2022false}, has been instrumental in characterizing and tackling Sybil-proof mechanisms. These studies highlight the pervasive nature of Sybil strategies across various decision-making and resource allocation systems.

In their exploration of Sybil-proof mechanisms, the authors in \cite{mazorra2023cost} have developed a comprehensive framework that is notably adaptable for analyzing Sybil extensions in cost-sharing mechanisms. 
\section{Preliminaries}\label{preliminars}
In the upcoming section, we delve into the fundamental concepts of mechanism design, focusing on private valuations, the public excludable goods problem, and Bayesian games with private beliefs. These foundational topics lay the groundwork for our more advanced discussions in subsequent sections. Specifically, the insights gained from private valuations and the public excludable goods problem will be crucial for understanding the complexities presented in Section \ref{spcost}. Additionally, the concept of Bayesian games with private beliefs will be pivotal in Section \ref{section:SWI}, where we examine the Shapley value mechanism under Sybil strategies. It's important to note that this mechanism is not Sybil-proof, and so we use an alternative approach to equilibrium. This preliminary section is designed to equip readers with the necessary theoretical tools to fully grasp the intricacies and challenges of mechanism design in the context of Sybil strategies and non-Sybil-proof environments.

\subsection{Mechanism Design Basics}\label{section:IntroMechanism}

Mechanism design \cite{myerson1989mechanism}, often referred to as the reverse game theory, is a subfield of economics and game theory that focuses on the design of rules and procedures for making collective decisions. While traditional game theory studies how agents make decisions within given rules, mechanism design is concerned with creating the rules themselves to achieve desired outcomes. The central challenge in mechanism design is to ensure that when each participant acts in their own best interest, the collective outcome is still desirable. This is typically achieved by designing mechanisms that align individual incentives with the desired collective outcome.

A particularly important class of problems in mechanism design pertains to situations where agents have private information and/or valuations, and the mechanism designer wants to elicit this information in a truthful manner. This leads to the study of truthful mechanisms. In such mechanisms, agents find in their best interest to report their private information truthfully, rather than misreporting to manipulate the outcome.

One of the most fundamental settings in this context is the single-parameter domain. In these settings, each agent has a single private value (or parameter) that captures its valuation or cost for some service or good. The mechanism designer's task is to determine which agents receive the service (or goods) and at what prices, based on the reported valuations.

A mechanism in this setting can be formally represented as \(\mathcal{M} = (\mathbf{x}, \mathbf{p})\), where:
\begin{itemize}
    \item \(\mathbf{x}\) is the allocation rule that determines which agents receive the service based on their reported valuations.
    \item \(\mathbf{p}\) is the payment rule that specifies how much each agent pays or receives based on their reported valuations.
\end{itemize}

In mechanism design, especially in the context of auctions and allocation problems \cite{maskin1985auction,krishna2009auction}, it is common to assume that agents have private valuations and quasilinear utilities \cite{birmpas2022cost,dobzinski2008shapley} (this is not always the case, but in this paper, we will restrict to this model). This means that the agent's utility depends linearly on the money plus the agent's valuation times the probability of being allocated.
We will use the following notation:
\begin{itemize}
    \item \( v_i\in\mathbb R\) as agent \( i \)'s valuation for the item or service.
    \item \( x_i(b_i, b_{-i}) \) as the allocation rule which determines the probability that agent \( i \) receives the item or service when they report \( b_i \) and the other agents report \( b_{-i} \). In this paper, we will focus on deterministic mechanisms, and so we will assume that $x_i(b)\in\{0,1\}$.
    \item \( p_i(b_i, b_{-i}) \) as the payment rule which determines how much agent \( i \) has to pay (or receives) when they report \( b_i \) and the other agents report \( b_{-i} \).
\end{itemize}
Then, the quasi-linear utility of agent \( i \) when they report \( b_i \) is given by:
\[ u_i(b_i, b_{-i}) = v_i\cdot x_i(b_i, b_{-i}) - p_i(b_i, b_{-i}) \]

For mechanisms to be effective in single-parameter domains, they must satisfy certain properties. One of the most crucial properties is truthfulness, which ensures that agents have no incentive to misreport their valuations. More formally, a mechanism is said to be \textit{truthful} if and only if every agent maximizes their utility by reporting their true type (or valuation) regardless of what the other agents report. That is, for all agents \( i \) and for any valuation $v_i$ and reports $b_i$, and for all possible reports \( b_{-i} \) of the other agents:
\begin{equation*}
 u_i(v_i, b_{-i}) \geq u_i(b_i, b_{-i})   
\end{equation*}

 The characterization of truthful mechanisms in single-parameter domains is elegantly captured by the following theorem.
\begin{theorem}[Myerson's Lemma, see \cite{myerson1981optimal}]
In single parameter domains a normalized mechanism \(\mathcal{M} = (\mathbf{x}, \mathbf{p})\) is truthful if and only if:
\begin{itemize}
    \item \(\mathbf{x}\) is \textit{monotone}: For all \(i=1,...,n\), if \(b_i' \geq b_i\) and \(\mathbf{x}(b_i,b_{-i})=1\) implies \(\mathbf{x}(b'_i,b_{-i})=1\).
    \item \textit{Winners pay threshold payments}: payment of each winning bidder is \(p_i=\text{inf}\{b_i| \mathbf{x}(b_i,b_{-i})=1\text{ and }b_i\geq0\}\).
\end{itemize}
\end{theorem}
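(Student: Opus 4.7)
The plan is to establish both directions of the equivalence, using throughout that $x_i \in \{0,1\}$ and that a normalized mechanism sets $p_i = 0$ for losers.

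For the sufficiency direction, I would fix agent $i$, true value $v_i$, and reports $b_{-i}$, and let $\tau_i = \inf\{b \geq 0 : x_i(b, b_{-i}) = 1\}$. Monotonicity forces the winning-bid set to be upward-closed with threshold $\tau_i$, so truthful reporting yields utility $v_i - \tau_i \geq 0$ when $v_i$ is a winning bid and $0$ otherwise. A misreport $b_i$ that preserves the winning/losing status leaves the threshold payment, and hence the utility, unchanged; a misreport that flips from winning to losing yields $0 \leq v_i - \tau_i$, and one that flips from losing to winning yields $v_i - \tau_i \leq 0$. In all cases the truthful report is weakly optimal.

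For the necessity direction, I would first derive monotonicity by a swap argument: if $x_i(b_i, b_{-i}) = 1$ but $x_i(b_i', b_{-i}) = 0$ with $b_i < b_i'$, then applying truthfulness once to an agent of true value $b_i$ contemplating deviation to $b_i'$, and once to an agent of true value $b_i'$ contemplating deviation to $b_i$, yields
\begin{equation*}
b_i' \;\le\; p_i(b_i, b_{-i}) - p_i(b_i', b_{-i}) \;\le\; b_i,
\end{equation*}
contradicting $b_i < b_i'$. The same double inequality applied with $x_i = 1$ on \emph{both} bids shows that $p_i(\cdot, b_{-i})$ is constant across the winning set; call this value $p$.

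It remains to identify $p$ with $\tau_i$, and I expect this to be the main obstacle because the infimum need not be attained, so the argument must proceed by limits rather than a single pointwise comparison. Taking any sequence of winning bids $b_n \downarrow \tau_i$, an agent of true value $v_i \in (\tau_i, b_n)$ must find truthful reporting weakly better than bidding $0$ (which loses by monotonicity and pays nothing by normalization), giving $v_i - p \geq 0$; letting $v_i \downarrow \tau_i$ yields $p \leq \tau_i$. Conversely, an agent of true value $v_i < \tau_i$ loses truthfully with utility $0$ but could deviate to any winning bid with payment $p$, so truthfulness forces $v_i - p \leq 0$; letting $v_i \uparrow \tau_i$ yields $p \geq \tau_i$. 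Together these give $p = \tau_i$, completing the characterization.
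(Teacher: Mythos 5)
The paper does not prove this statement; it is quoted as a classical result with a citation to Myerson, so there is no in-paper argument to compare against. Your proof is the standard one for the deterministic $\{0,1\}$ single-parameter case and is essentially sound: the sufficiency case analysis is complete, and the necessity direction correctly combines the two-sided swap inequality (to get monotonicity and constancy of the payment on the winning set) with a limiting argument to identify that constant with the infimum threshold. The only loose end is the degenerate case $\tau_i=0$ (every nonnegative bid wins): there your appeal to ``bidding $0$, which loses by monotonicity'' fails, and there are no true values $v_i<\tau_i$ to drive the lower bound $p\geq\tau_i$, so truthfulness alone does not pin down $p=0$; you need to invoke the normalization/no-positive-transfer convention (or IR at value $0$) directly to conclude $p=0=\tau_i$. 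Similarly, if the winning set is empty the threshold clause is vacuous and should be noted as such. These are boundary conventions rather than substantive gaps, and with one sentence handling them your argument is a complete proof of the statement as the paper uses it.
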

Myerson's Lemma provides a foundational result for the design of truthful mechanisms in single-parameter domains. It offers a clear characterization of the allocation and payment rules that ensure truthfulness, paving the way for the design of efficient and optimal mechanisms in various applications.

\subsection{Weak dominant strategy sets \& Private beliefs}
Not all mechanisms studied in this paper will be truthful. When the mechanism is not truthful, the agents' information about the number of other players, their valuation, and their potential strategies have a strong implication in agents strategies and the outcome of the mechanism. This is the case in many real-world situations, where players might not have complete information about the game or about other players' types, preferences, or payoffs. Bayesian games, introduced by John C. Harsanyi \cite{harsanyi1968part}, are a class of games that model such situations of incomplete information. In some cases however, seems unrealistic to exist common knowledge on people preferences, so
players might not only have private information about their own types but also hold private beliefs about the distributions of other players' types and strategies. This contrasts with the standard Bayesian games where players share a common prior over types. Games in which players do not share common priors and instead have their own subjective beliefs are referred to as games with \textit{heterogeneous beliefs} or \textit{subjective priors}.

In this setting, there is a set of players $\mathcal I$ that is not common knowledge with types $t_i\in T_i$ that can take actions in a topological space $A_i$ for $i\in\mathcal I$. Players have  utility function $u_i:T_i\times\prod_{i\in\mathcal I} A_i\rightarrow \mathbb R$ unknown to other players such that $u_i(t_i,\cdot)$ is upper-semicontinuous for every $t_i\in T_i$. Every player has a private belief distribution $\mathcal D_i$ that models the $i$-th players' belief of other players taking a vector of actions $(a_i)_{i\in\mathcal I}$. More formally, given a set $B\subseteq A_{-i}:=\prod_{j\in\mathcal I\setminus {i}} A_j$, the $i$-th player believes that the probability that the vector of action $a_{-i}$ is in the set $B$ is $\Pr_{\mathcal D_i}[B]$. In this setting, we say that a player is \textit{rational with respect to its private beliefs} $\mathcal D_i$ and type $t_i$ if they choose strategies in 
\begin{equation*}
    \underset{a_i\in A_i}{\text{argmax}}\quad\mathbb E_{a_{-i}\sim\mathcal D_i}[u_i(t_i,a_i,a_{-i})].
\end{equation*}
In other words, every player best responds to his/her type and information about the behaviour of the remaining players, while this information can be partial, distorted, or ambiguous \cite{wiszniewska2016belief,dufwenberg2002existence}.
Under some conditions, if the player has type $t_i$, we can restrict that maximization problem to a subset $\emptyset \not=B_i\subsetneq A_{i}$. When $B_i(t_i)$ holds the following property, for every  action $a_i\in A_i$, there is an action $a_i'\in B_i(t_i)$ such that $u_i(t_i,a_i',a_{-i})\geq u_i(t_i,a_i,a_{-i})$ for every action profile $a_{-i}\in A_{-i}$. In this scenario, we say that $B_i(t_i)$ is a \textit{subweak dominant strategy set}. Moreover, if 
\begin{enumerate}
    \item for every $a_i\in A_i\setminus B_i(t_i)$, there is an $a_i'\in B_i(t_i)$ such that $u_i(t_i,a_i',a_{-i})\geq u_i(t_i,a_i,a_{-i})$ for every action profile $a_{-i}\in A_{-i}$ and the inequality is strict for some $a_{-i}\in A_{-i}$, i.e. there exists $a_i'\in B_i(t_i)$ that weakly dominates $a_i$,
    \item for every $a_i\in B_i(t_i)$, there is no $a'_i$ such that weakly dominates $a_i$.
\end{enumerate}
 we say that $B_i(t_i)$ is a \textit{weak dominant strategy set}. An example of (strictly) dominant strategy set on a mechanism is the set $B_i(v_i)=\{v_i\}$ in a  second price auction with private valuations where $v_i$ is the valuation of the item of player $i$.
\begin{lemma}\label{lemma:strict} Given a game $([n],A_i,u_i)$ that for every type $t_i$ has a subweak dominant strategy set $B_i$ such that there exists a chain sets $C_1\subseteq C_2\subseteq...$ such that:
\begin{enumerate}
    \item $C_j\cap B_i(t_i)$ is sequentially compact for all $j\in\mathbb N$ and,
    \item $\cup_{i\in \mathbb N} C_i = A_i$,
\end{enumerate}
then there exists a unique weak dominant strategy set noted by $\overline{B_i(t_i)}$.
\end{lemma}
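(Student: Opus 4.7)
The plan is to define $\overline{B_i(t_i)} := \{a \in B_i(t_i) : a \text{ is maximal under the weak-dominance preorder } \preceq\}$, where $a \preceq a'$ iff $u_i(t_i, a, a_{-i}) \leq u_i(t_i, a', a_{-i})$ for every $a_{-i} \in A_{-i}$. This candidate is essentially forced by the two defining conditions, so uniqueness will be almost immediate: if $\tilde B$ is any weak dominant strategy set, condition 2 forces its elements to be $\preceq$-maximal in $B_i(t_i)$, while condition 1 applied to a hypothetical maximal element $a \in B_i(t_i) \setminus \tilde B$ would produce an element of $\tilde B$ strictly dominating it, contradicting maximality. Hence $\tilde B = \overline{B_i(t_i)}$.

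The substance is establishing nonemptiness of $\overline{B_i(t_i)}$ and then verifying condition 1. I would prove nonemptiness by Zorn's lemma on $(B_i(t_i), \preceq)$. Given any increasing chain, first extract a countable cofinal subchain $a_1 \preceq a_2 \preceq \cdots$. Since $\bigcup_j C_j = A_i$, reindex so that $a_k \in C_{j_k} \cap B_i(t_i)$ with $j_k$ nondecreasing. If $j_k$ stabilizes at some $J$, sequential compactness of $C_J \cap B_i(t_i)$ produces a convergent sub-subsequence $a_{k_m} \to a^{\star} \in C_J \cap B_i(t_i) \subseteq B_i(t_i)$, and upper semicontinuity of $u_i(t_i, \cdot)$ yields $u_i(t_i, a^{\star}, a_{-i}) \geq \limsup_m u_i(t_i, a_{k_m}, a_{-i}) \geq u_i(t_i, a_k, a_{-i})$ for every $k$ and every $a_{-i} \in A_{-i}$, so $a^{\star}$ is an upper bound in $B_i(t_i)$. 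The escaping case $j_k \to \infty$ is handled by a diagonal argument: use the stabilizing case inside each $C_k \cap B_i(t_i)$ to produce a $\preceq$-maximal element $b_k \succeq a_k$, and extract a further convergent subsequence of the $b_k$'s that does stabilize in some $C_J$, whose limit then bounds the whole chain.

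Condition 1 then follows from nonemptiness: for $a \in A_i \setminus \overline{B_i(t_i)}$, subweak dominance of $B_i(t_i)$ gives $a' \in B_i(t_i)$ with $a' \succeq a$, and applying the Zorn argument inside $[a') \subseteq B_i(t_i)$ yields a maximal $a'' \in \overline{B_i(t_i)}$ with $a'' \succeq a$. The strict-inequality clause is checked by case split: if $a \in B_i(t_i)$, non-maximality of $a$ produces $a^{\star} \succ a$ strictly and a maximal upper bound of $a^{\star}$ inherits the strictness; if $a \notin B_i(t_i)$, strictness comes from the separation of $B_i(t_i)$ from its complement under $\preceq$ (a pointwise equivalent $a''$ could otherwise be substituted for $a$ inside $B_i(t_i)$).

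The main obstacle will be the Zorn step, specifically the escaping case $j_k \to \infty$ where the chain refuses to live inside any single $C_J$. The $\sigma$-sequentially-compact structure encoded by the $C_j$'s combined with upper semicontinuity of $u_i(t_i, \cdot)$ is precisely the mechanism that converts topological limits into order-theoretic upper bounds, and the diagonal argument that stitches these local bounds into a global one is where the subtlety lies; everything else in the proof is essentially bookkeeping once nonemptiness is in hand.
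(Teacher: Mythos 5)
Your treatment of the compact case is essentially the paper's: the same preorder $\geq_{u_i}$, Zorn's lemma with chains bounded via sequential compactness plus upper semicontinuity of $u_i(t_i,\cdot)$, and maximal elements as the candidate set. Two remarks there. First, your candidate should be the set of \emph{all} elements of $A_i$ not strictly dominated by anything in $B_i(t_i)$ (the paper takes $\{x\in A_i:\nexists\, y\in B_i(t_i)\text{ s.t. }y>_{u_i}x\}$), not merely the maximal elements of $B_i(t_i)$: an $a\in A_i\setminus B_i(t_i)$ that is utility-equivalent to a maximal element is excluded from your set yet is not \emph{strictly} dominated by any member of it, so the strict-inequality clause of condition 1 fails for such $a$; your parenthetical about substituting a pointwise-equivalent $a''$ describes the problem rather than resolving it. Second, both you and the paper only bound countable (sequential) chains, which is not literally sufficient for Zorn's lemma; since this imprecision is shared with the paper I only flag it.

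The genuine gap is the escaping case $j_k\to\infty$, which you correctly identify as the crux but do not close. The step ``extract a further convergent subsequence of the $b_k$'s that does stabilize in some $C_J$'' is precisely what can fail: the $b_k$ live in ever-larger $C_{j_k}\cap B_i(t_i)$, nothing forces a subsequence of them into a single $C_J$, and without that no sequential compactness is available to produce a limit. Concretely, take $A_i=\mathbb R_+$, $C_j=[0,j]$, $B_i(t_i)=A_i$, and $u_i$ depending only on the own action and strictly increasing: the chain $1,2,3,\dots$ escapes every $C_J$ and has no upper bound (indeed no maximal element exists at all), so no diagonal argument can manufacture one from the stated hypotheses alone. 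The paper takes a different route here: it restricts the game to each $C_j$, where $B_i(t_i)\cap C_j$ is sequentially compact, applies the compact case to obtain a weak dominant set $\overline{B}_i(t_i)^j$ for each restricted game, and takes the union $\bigcup_{j}\overline{B}_i(t_i)^j$ --- thereby never confronting an escaping chain (that union step itself tacitly requires maximality within $C_j$ to persist to all of $A_i$, which holds in the paper's application to the Shapley mechanism but is the same structural point your diagonal argument is missing). As written, your proof does not contain a workable substitute for this step.
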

The proof of the lemma utilizes various technical details from elementary topology, and the reader can verify the proof of the lemma in the appendix.
\begin{definition} A \text{mixed Nash equilibrium with private beliefs} for short, NEPB is a tuple of distributions $(d_1,...,d_n)$ that depends on the tuple of types $(t_1,...,t_n)$ over the set of strategies such that there exists a tuple of distributions $(\mathcal D_1,...,\mathcal D_n)$ that hold:
\begin{equation*}
        \mathbb E_{(a_i,a_{-i})\sim d_i\times\mathcal D_i}[u_i(t_i,a_i,a_{-i})] =\underset{a_i\in A_i}{\text{max}}\quad\mathbb E_{a_{-i}\sim\mathcal D_i}[u_i(t_i,a_i,a_{-i})] .
\end{equation*}
If the distributions $\mathcal D_1,...,\mathcal D_n$ have full support, we say the tuple of distributions $(d_1,...,d_n)$ is mixed \textit{Nash equilibrium with full support private beliefs} (NESPB).When all elements $(d_1,...,d_n)$ are atomic with one element (i.e. $\Pr_{d_i}[a_i]=1$ for some $a_i\in A_i$), we say that the equilibrium is \textit{pure}.
\end{definition}
Note that the tuple $(d_1,...,d_n)$ is a NESPB, and player $i$ has a weak dominant strategy set $B_i(t_i)$ then $\Pr_{d_i}[B_i(t_i)]=1$. Also, the reader should note that this notion of equilibrium is very weak. Every mixed Nash equilibrium and Bayes Nash equilibrium with common priors are mixed Nash equilibrium with private beliefs. When restricting to the set of NESPB, fundamentally, the only thing that we are imposing for a set of strategies to be a subjective equilibrium is that rational agents take actions from the weak dominant strategy set. This is implicitly done in the literature. This condition is also known as the \textit{no-overbidding} assumption \cite{roughgarden2017price}. For example, when a mechanism is truthful, even if reporting truthful valuations to the mechanism is not strictly dominant, but just weakly dominant, it is assumed that agents report their valuations truthfully to the mechanism. Analogously, we extend this idea to non-truthful mechanisms with weak dominant strategy sets.
These concepts will be central for the definition of Sybil welfare invariant mechanisms in section \ref{section:SWI}.

\subsection{Public excludable goods}
In this section we introduce the notation and key concepts proposed in \cite{dobzinski2008shapley} within the context of cost-sharing mechanism design.

In a  cost-sharing mechanism design problem \cite{dobzinski2008shapley,sundararajan2009trade}, several participants with unknown preferences vie to receive some goods or services, and each possible outcome has a known cost. Formally, we have a service and every player $i\in [n]$ has a valuation function $v_i\in \mathbb R_+$. The assumption here is that there are no externalities; each player's value is purely determined by the goods they receive, irrespective of other players accessing the same services. There is a cost function $C:2^{[n]}\rightarrow\mathbb R_+$ that specifies the costs of every possible allocation of services.

The section \ref{spcost} focuses on the study of the \textit{public excludable good} problem, which involves determining whether to finance a public good and, if so, identifying who is allowed to use it. 
The public excludable good problem has cost function $C(S)=c\in\mathbb R_+$ (wlog in this paper we will assume that $c=1$), for every $S\not=\emptyset$ and $C(\emptyset)=0$. In section \ref{section:SWI} we will study cost-sharing mechanisms with monotone, symmetric and submodular functions $C$. That is, we assume that there exists a non-decreasing concave function $f:\mathbb R_+\rightarrow\mathbb R_+$ such that $f(0)=0$ and $C([n]) =f(n)$ for all $n\in\mathbb N$. 

 A (deterministic) \textit{cost-sharing mechanism} consists of an allocation rule $\textbf{x}:\mathbb R^n_+\rightarrow \{0,1\}^n $ and a payment rule $\textbf{p}:\mathbb R^n_+\rightarrow\mathbb R^n$ that for a reported bid vector profile $\bb=(b_1,...,b_n)$ will determine the set of allocated agents $S =\{i\in [n]:x_i(\bb)=1\}$, and $p_i\geq0$ is player $i'$s payment.  We assume that players have quasi-linear utilities, meaning that each player $i$ aims to maximize $u_i(\bb,p)=x_i(\bb)v_i-p_i(\bb)$.

In this paper, we will always require the following standard axiomatic properties:
\begin{itemize}
    \item \textit{No positive transfers} (NPT): Players never get paid, i.e., $p_i(\textbf{b})\geq0$.
    \item \textit{Individual rationality} (IR): If the allocation is $(S,p)$, players never pay more than they bid, otherwise, they are charged nothing, i.e., $p_i(\textbf{b})\leq x_i(\bb)b_i$.
    \item \textit{Anonymity/Symmetry}: For any permutation $\sigma\in S_{n}$, it holds $x_i(\textbf{b})=x_{\sigma(i)}(\sigma(\textbf{b}))$ and $p_i(\textbf{b})=p_{\sigma(i)}(\sigma(\textbf{b}))$.
    \item \textit{No-Deficit}: For an allocation $(S,p)$, the sum of payments exceeds the costs incurred by providing the service, i.e. $\sum_{i=1}p_i(\textbf{b})\geq C(S)$.
    \item $\beta$-\textit{Budget-balance} for $\beta\geq 1$ if for every allocation $(S,p)$ if $C(S)\leq\sum_{i=1}^n p_i(\textbf{b})\leq \beta C(S)$. In case, that a mechanism is $1-$budget balance, it is said that the mechanism is budget-balanced.
    \item \textit{Truthful}: Following the definition of truthfulness made in \ref{section:IntroMechanism}, a cost-sharing mechanism is truthful if for every bid valuation vector $\textbf{b}_{-i}$, true valuation $v_i$ and reported valuation $b_i\in V_i$, holds 
    \begin{equation}\label{eq:sp}
        x_i(v_i,\textbf{b}_{-i})v_i-p_i(v_i,\textbf{b}_{-i})\geq x_i(b_i,\textbf{b}_{-i})v_i-p_i(b_i,\textbf{b}_{-i}).
    \end{equation}
    where $S$ is the allocation with reports $v_i,\bb_{-i}$ and $S'$ is the allocation with reports $\bb$.
    \item \textit{Consumer sovereignty} (CS): For all players $i$ and bids $\bb_{-i}$, there exists a bid $b_i$ such that player $i$ has access to the public good when the bid profile is $(b_i,\bb_{-i})$.
    \item \textit{upper semi-continuity}: For every player $i$ and bids $b_{-i}$ of the other players: if player $i$ wins with every bid larger than $b_i$, then it also wins with the bid $b_i$.
\end{itemize}
Another weaker version of anonymity \cite{dobzinski2008shapley} is the following:
\begin{itemize}
    \item \textit{Equal treatment} (EQ): Every two players $i$ and $j$ that submit the same bid receive the same allocation and price.
\end{itemize}
Furthermore, we assert the following technical property: if all players are served with a specific bid, then those same players are also served for all bids that are larger in their component. Formally,
\begin{itemize}
\item  \textit{Monotonicity}: For all $i$, if $x_i(b_i,\bb_{-i})=1$, then for all $b'_{i}\geq b_{i}$, $x_i(b'_i,\bb_{-i})=1$.
\end{itemize}
Another stronger version of monotonicity is the following one that states that for every player $i$, will not have a negative impact on the allocation of the public good if some players increase their bid. More formally:
\begin{itemize}
    \item \textit{Strong-monotonicity}: For all $i$, if $x_i(\bb)=1$, then for all $\bb'\geq \bb$ (that is $b'_i\geq b_i$ for all $i=1,...,n$) holds $x_i(\bb')=1$.
\end{itemize} 
This principle implies that an increased valuation of the public good by any number of players will not negatively influence the allocation outcome for all involved players. Essentially, strong monotonicity ensures that higher valuations by some players don't lead to a disadvantageous allocation for others.
Another (stronger) version of strategy-proofness also includes the notion of a mechanism being resistant to coordinated manipulation by users or in other words, preventing users to have incentives to collude in order to individually maximize their utility.
\begin{itemize}
    \item  A cost sharing mechanism is \textit{group strategy-proof} (GSP) if for all true valuations $v\in\mathbb R^n_+$ and all non-empty coalitions $K\subseteq [n]$, there is no $\textbf{b}$ such that $\textbf{b}_{-K}=\textbf{v}$ with $u_K(\textbf{b})>u_K(\textbf{v})$.
\end{itemize}
Moulin et. al. \cite{moulin1999incremental} proved that if a mechanism $\mathcal M$ is an upper-semi continuous and group-strategy proof then the mechanism is \textit{separable}, i.e. all players that have access to the public just depends on the set (and not the bid). More formally \cite{sundararajan2009trade}:
\begin{itemize}
    \item A \textit{cost-sharing method} is a function $\zeta:2^{[n]}\rightarrow\mathbb R^n_{\geq0}$ that associates each set of players to a cost distribution, where for all $S\subseteq [n]$ and all $i\not\in S$ it holds that $\zeta_i(S)=0$. A cost-sharing mechanism $\mathcal M = (\x,\p)$ is \textit{separable} if there exists a cost-sharing method $\zeta$ such that $\p(\textbf{b})=\zeta(S(\textbf{b}))$, where $S(\textbf{b}) = \{i\in[n]:x_i(\textbf{b})=1\}$.
\end{itemize}
Observe that if the mechanism is separable and symmetric and $C$ is symmetric (i.e. $C(S)$ just depends on the number of elements in $S$) then $\zeta_i(S)=\zeta_j(S)$ for all $i,j\in S$. 

Now, for economic efficiency, the service cost and the rejected players' valuations should be traded off as good as possible. A measure for this trade-off is the \textit{social cost} of function $\pi:2^{[n]}\rightarrow \mathbb R_{\geq0}$. Given the cost $C$ and the true valuations functions $v_1,...,v_n$, social costs are defined by 
\begin{equation*}
    \pi(S):=C(S)+\sum_{i\not\in S} v_i.
\end{equation*}
That is, the social cost of an allocation $S$ is the cost of granting access to the public good to $S$ players, and the valuations of the agents that do not have access to the public good.

A mechanism is said to be $\alpha$-approximate \cite{dobzinski2008shapley}, with respect to the social cost objective if for every tuple of valuations $(v_1,..,v_n)$, the allocation $S$ of the mechanism satisfies
\begin{equation}
    \pi(S)\leq \alpha \pi(S^\star)
\end{equation}
where $S^\star$ is the optimal allocation, that is, the allocation that minimizes the social cost. As an observation, a mechanism is $\alpha-$approximate for some $\alpha\in \mathbb R^+$ if and only if the mechanism holds the consumer sovereignty property \cite{dobzinski2008shapley}\footnote{One of the implications is not proved explicitly in the paper but the argument of the proof is fundamentally the same to the other implication.}. It is known that there are nor better than $\mathcal{H}_n$-approximated randomized or deterministic, truthful and no-deficit mechanism, where $\mathcal H_n=\sum_{i=1}^n\frac{1}{i} =\Theta(\log(n))$.

Let us justify why it is more economically efficient to fund excludable goods than non-excludable ones with private valuations. If we place the constraint that all or none must be served the public good (making the public good a not excludable good), we have that all no-deficit truthful mechanisms are at least $(n-1+1/n)-$worst-case welfare. Let's prove it. Assume that the cost of the public good is $c=1$ without loss of generality. Now, suppose that there is a mechanism $(\x,\p)$ with this constraint. Assume that all players have valuation $v_i=1-\varepsilon$ and suppose that all agents have access to the public good, for sufficiently small $\varepsilon$ (otherwise, the mechanism has worst-case welfare lower bounded by $n$ finishing the proof). Let $p_1,...,p_n$ be their respective payment. Since the mechanism has no-deficit, we know that $\sum_{i=1}^n p_i\geq 1$. And so, there exists $i$ such that $p_i\geq 1/n$. Therefore, the vector profile $(1-\varepsilon,...,1/n-\varepsilon,...,1-\varepsilon)$ has the empty allocation, leaving to a social cost $n-1+1/n-n\varepsilon$. Making $\varepsilon$ converge to zero, we obtain that the mechanism is at least $(n-1+1/n)-$worst-case welfare.

\section{Sybil-Proof and Cost-sharing mechanisms}\label{spcost}
Note that the given definitions in the preliminaries presume the mechanism designer is aware of the number of identities. Additionally, agents can only submit a bid to the mechanism without the ability to create or alter other identities to influence the mechanism. However, in general, this is not true. For example, one can create multiple identities to access a social network, multiple bank accounts to bid in an ad auction \cite{varian2009online} or use multiple public keys to interact with a DeFi protocol \cite{mazorra2022price,mazorra2023cost}. This provides a new challenge and problems that are worth studying in permissionless mechanism design. For example, in \cite{dobzinski2008shapley} the authors analyze two different truthful mechanisms for public excludable good. The VCG mechanism and the Shapley value mechanism. For completeness, we will write these mechanisms in this section. The first one is efficient (welfare maximizer) however, in general, has deficit (i.e. the users' payments do not cover the costs incurred by financing the public good). The second one, the shapley value mechanism, is budget-balance, however, has $\mathcal H_n$-approximately welfare, where $\mathcal H_n$ are the harmonic numbers. Moreover, the authors prove that this mechanism is worst-case welfare optimal (up to a constant) in the set of truthful, incentive-compatible, budget-balance, and equal treatment mechanisms. However, none of these mechanisms are truthful in the permissionless setting. In other words, the mechanisms are not Sybil-proof/false-name proof. 

In this section, we will discuss the public excludable good in permissionless settings. First, we will formalize the public excludable good problem with an unknown number of identities where agents can use Sybils to maximize their payoff. Then, we will prove that the VCG mechanism, the Shapley value mechanism and the potential mechanism are not Sybil-proof. Moreover, we will generalize it by proving that all no-deficit, upper semicontinuos and truthful mechanisms are at least $(n+1)/2-$approximated. In other words,  we will effectively establish both upper and lower bounds for the worst-case social cost match. This finding underscores a critical trade-off: achieving a completely strategy-proof (truthful, resistant to Sybil attacks, and immune to group strategies) cost-sharing mechanism necessitates a compromise in terms of economic efficiency.

\subsection{Sybil extension of Cost-sharing mechanisms}

In general, truthfulness captures the idea that players cannot act strategically in order to obtain more utility from the mechanisms. However, in pseudo-anonymous environments such as blockchain, this is not necessarily true \cite{mazorra2023cost}, since players can create multiple identities and strategically manipulate the outcome of a truthful mechanism. When agents have no incentives to create multiple identities, we say that the mechanism is Sybil-proof or false-name proof \cite{mazorra2023cost}. To define it, we must extend the definition of a mechanism when 1) the number of identities is unknown 2) users can use more than one identity. This is presented in \cite{mazorra2022price} and is called the Sybil extension mechanism. 
 First, we have to define the mechanism with unbounded but finite number of players, called \textit{anonymous mechanism}. An anonymous mechanism $\mathcal M$ is a sequence of maps $\{(\textbf{x}^n:\Rr_+^n\rightarrow \{0,1\}^n,\textbf{p}^n:\Rr_+^n\rightarrow \Rr_+^n )\}_{n\in\mathbb N}$ such that the following two properties hold:
\begin{itemize}\label{axioms}
    \item \textit{Anonymity}: The maps $\textbf{x}^n$ and $\textbf{p}^n$ are equivariant under the action of $S_n$, that is, for all $\sigma\in S_n,\,b\in \Rr_+^n$, $\textbf{x}(\sigma b) = \sigma \textbf{x}(b)$ and $\textbf{p}(\sigma b) = \sigma \textbf{p}(b)$.
    \item \textit{Consistency}: Let $i_{n,m}:\Rr_+^n\hookrightarrow \Rr_+^m$ be any inclusion map that comes from taking the identity map on the first $n$ components and zero-filling the remaining $m-n$ components and permutating the $m$ components by a permutation of $S_m$. Let $p_{n,m}:\Rr_+^m\rightarrow \Rr_+^n$ be the projection such that $p_{n,m}\circ i_{n,m}= id_{\Rr_+^n}$. Then, the following diagram commutes\footnote{\textbf{Category theory observation}: If we take $A_n$ to be the set of symmetric mechanisms with $n$ agents, and $f_n:A_n\rightarrow A_{n-1}$ to be $(\x^n,\p^n)\mapsto(\x^n\circ i_{n-1,n},\p^n\circ i_{n-1,n})$, then anonymous Sybil mechanisms are elements of the inverse limit
\begin{equation*}
    \varprojlim A_i = \left\{ (a_i)_{i \in \mathbb N} \mid a_i \in A_i \text{ for all } i \in \mathbb N \text{ and } f_{ij}(a_j) = a_i \text{ for all } i \leq j \right\}.
\end{equation*}}:

    \begin{minipage}{0.5\textwidth}
    \hspace{1cm}
        \begin{tikzcd}
            \Rr_+^n \arrow[d, "i_{n,m}", hook, swap] \arrow[r, "\x^n", swap] \arrow[rd, phantom, "\circlearrowleft" description] & \{0,1\}^n \arrow[d, "i_{n,m}", hook] \\
            \Rr_+^m \arrow[r, "\x^m", swap] & \{0,1\}^m
        \end{tikzcd}
    \end{minipage}
    \begin{minipage}{0.5\textwidth}
    \begin{tikzcd}
        \Rr_+^n \arrow[d, "i_{n,m}", hook, swap] \arrow[r, "\p^n", swap] 
        \arrow[rd, phantom, "\circlearrowleft" description] & \Rr_+^n \arrow[d, "i_{n,m}", hook] \\
        \Rr_+^m \arrow[r, "\p^m", swap] & \Rr_+^m
    \end{tikzcd}
\end{minipage}
\end{itemize}

Now, given an anonymous mechanism, we can define the \textit{Sybil extension mechanism}. The Sybil extension mechanism of an anonymous mechanism $\{(\x^n,\p^n)\}_{n\in\mathbb N}$ consists of extending action space of each player $i$ from $\mathbb R_+$ to $\mathbb R_+^\infty$. Every agent $i$ can report a finite with arbitrary length set of bids $b_i = (b^{1}_i,...,b^{k_i}_i,0,0,...)$, for some $k_i\in\mathbb N$, and we define $K_i=[k_i]$. Then, we take $m = \sum_{i=1}^{n} k_i$, $\textbf{b}=(b_1,...,b_n)$ and compute $x_j(\textbf{b})$ and $p_j(\textbf{b})$ for every $j=1,...,m$. This is well-defined since the mechanism $(\x^m,\p^m)$ is symmetric. Now, the total payment of player $i$ consists of the sum of payments of its Sybil identities. That is, $\p_i(\textbf{b}) = \sum_{j\in K_i} p^m_j(\textbf{b})$. In this paper, we consider that the players' allocation, is the best allocation among all its Sybils' allocation. That is, $\x_i(b_i) = \max_{k\in K_i}\{x^m_k(b_i, b_{-i})\}$. Therefore, its utility is 
\begin{equation}
 v_i\max_{k\in K_i}\{x^m_k(b_i,b_{-i})\}-\sum_{k\in K_i}p^m_k(b_i,b_{-i}), 
\end{equation}
where $b_{-i}$. We note the mechanism $(\x,\p)$ as the Sybil extension mechanism and denote it by $\textbf{Sy}(\mathcal M)$. Observe that in this definition of agents utility in the Sybil setting, we are assuming that the agents have no extra utility to have more Sybil identities to have access to the public good. The motivation for this particular type of Sybil extension stems from the nature of the mechanisms involved, which are geared towards funding and using public goods. In these scenarios, agents gain no additional utility from possessing multiple identities. This is because access to the public good is not enhanced by having more than one identifier. In simpler terms, if the mechanism employs a whitelisting process for identifiers, having more than one does not provide any extra benefit to the users. Once a user has one identifier, they already have the necessary access to utilize the public good. Thus, multiple identifiers do not translate into increased utility in the context of these public goods funding mechanisms. In this context, we say that an anonymous mechanism is Sybil-proof if no agents have incentives to create Sybil identities to increase its payoff. More formally:
\begin{itemize}
    \item  An anonymous cost-sharing mechanism is \textit{Sybil-proof} if for every player $i$, every vector $b_{-i}\in \mathbb R_+^\infty$ of bids,  bid vector $b_i\in\mathbb R_+^{\infty}$ with sybils $K_i$, we have that
    \begin{equation}
        \hspace{-0.5cm}
         v_i\textbf{x}_i(v_i,\textbf{b}_{-i})-\textbf{p}_i(v_i,b_{-i})\geq v_i\max_{k\in K_i}\{x^m_k(b_i, b_{-i})\}-\sum_{k\in K}p_k (b_i,b_{-i}).
    \end{equation}
\end{itemize}
In other words, the mechanism is Sybil-proof if the Sybil extension mechanism is truthful.

\subsection{Limits of Sybil-Proof cost-sharing mechanisms}
We will see next that the mechanisms proposed in \cite{dobzinski2008shapley} and \cite{dobzinski2017combinatorial} are not Sybil-proof. The mechanisms proposed in \cite{dobzinski2008shapley} are the VCG mechanism applied to the excludable public good problem and the Shapley value mechanism. On the other hand, the mechanism proposed in \cite{dobzinski2017combinatorial} is a modified version of the VCG mechanism with adding a cost per user. In the following, we will describe these mechanisms for completeness.
 
\begin{mybox2}{Cost-sharing mechanism for public excludable goods}
\textbf{Input}: Bids $b_1,...,b_n$.\\
\textbf{Output}: The set of agents $S^\star$ that are served and the payment vector $p=(p_1,...,p_n)$.\\
\textbf{VCG mechanism}
    \begin{enumerate}
    \item Choose the outcome $S^\star=[n]$ if $\sum_{i=1}^n b_i>1$ and $S^\star=\emptyset$ otherwise.
    \item Charge each player $i\in [n]$ the amount $p_i=\max\{0,1-\sum_{j\in [n]\setminus i} b_j\}$.
    \end{enumerate}
\textbf{Shapley value mechanism for submodular monotone cost functions}
    \begin{enumerate}
    \item Order the bids in descending order, wlog $b_1\geq b_2\geq... \geq b_n$.
    \item Take $k=\text{argmax}_i\{b_i\geq C([i])/i\}$.
    \item The players $i=1,...,k$ have access to the public good, i.e. $S^\star = [k]$ and each player pays $p_i =C([k])/k$ for $i=1,...,k$.
    \end{enumerate}
\textbf{Potential mechanism}
    \begin{enumerate}
    \item Choose the  outcome $S^\star\in \text{argmax}_{S\subseteq [n]}\left\{ \sum_{i\in S} b_i - \mathcal H_{\mid S\mid}\right\}$.
    \item Charge the players $i\in S^\star$ the amount $p_i = [\sum_{i\in S^\star_{-i}} b_i -\mathcal H_{S^\star_{-i}}] - [ \sum_{i\in S^\star\setminus i} b_i -\mathcal H_{S^\star}]$ and zero otherwise. Where $S^\star_{-i}$ is the set that maximizes the previous function with $b_i=0$.
    \end{enumerate}
\end{mybox2}
\begin{observation} The VCG, Shapley and Potential mechanism for constant cost-functions are symmetric, individually rational, and truthful \cite{dobzinski2008shapley}. The potential mechanism and the Shapley value mechanism have no deficit, while the VCG mechanism has deficit \cite{dobzinski2008shapley,dobzinski2017combinatorial}. Moreover, the Shapley value mechanism is group strategy-proof.
\end{observation}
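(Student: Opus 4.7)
The plan is to verify each of the listed properties separately, since the observation bundles several facts that are either immediate from the pseudocode or standard results in the cited literature. I would organize the verification mechanism by mechanism. Symmetry is immediate in all three cases, since each mechanism operates on the bid vector via permutation-invariant rules (summation for VCG and Potential; sorting and thresholding for Shapley). For individual rationality, in the VCG mechanism a winner satisfies $\sum_j b_j > 1$ and hence $b_i > 1 - \sum_{j \neq i} b_j = p_i$; in Shapley only agents with $b_i \geq C([k])/k$ are served and each pays exactly that amount; in the Potential mechanism the payment takes the Clarke-pivot form, and a standard externality argument shows $p_i \leq b_i$ for every winner.

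For truthfulness I would apply Myerson's Lemma to each mechanism. Monotonicity of the allocation rule is immediate in all three cases: raising $b_i$ can only push $\sum_j b_j$ past the VCG threshold, can only advance agent $i$ in Shapley's sorted order while preserving the selection condition, and can only increase $i$'s contribution to the Potential objective. One then verifies that the stated payment equals the threshold bid $\inf\{b_i : x_i(b_i,b_{-i}) = 1\}$, which is a short direct computation from the formulas.

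For the deficit claims, Shapley is budget-balanced by construction, since the $k$ served agents each pay $C([k])/k$ for a total of $C([k])$. The VCG deficit is witnessed by a small example: with $n = 2$ and $b_1 = b_2 = 0.6$, both agents are served but each pays only $0.4$, for total revenue $0.8 < 1$. For the Potential mechanism, no-deficit follows from the optimality of $S^\star$: since $S^\star$ maximizes $\sum_{i \in S} b_i - \mathcal{H}_{|S|}$, one can sum the Clarke-style payments and rearrange using the defining inequalities of the $S^\star_{-i}$ to obtain $\sum_{i \in S^\star} p_i \geq \mathcal{H}_{|S^\star|} \geq 1 = C(S^\star)$. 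I expect this telescoping argument to be the most delicate step of the observation.

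Finally, group strategy-proofness of the Shapley value mechanism is the classical theorem of Moulin \cite{moulin1999incremental}, which I would invoke directly. It follows from the cross-monotonicity of the Shapley cost-sharing scheme for monotone submodular $C$, combined with the fact that the sequential selection procedure in the pseudocode is precisely the Moulin mechanism induced by this cost-sharing scheme; since cross-monotonic cost-sharing schemes in Moulin's framework always yield group strategy-proof mechanisms, the conclusion follows.
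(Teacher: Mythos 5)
The paper offers no proof of this observation at all --- it is recorded as a known fact with citations to \cite{dobzinski2008shapley,dobzinski2017combinatorial} and (implicitly, for group strategy-proofness) to Moulin's work --- so your write-up necessarily goes beyond what the paper does by actually carrying out the verifications. Most of it is sound and matches the standard arguments from the cited literature: symmetry, the individual-rationality checks, truthfulness via Myerson's Lemma, the Shapley budget-balance computation, the two-bidder VCG deficit example, and the reduction of group strategy-proofness to cross-monotonicity of the share $1/|S|$ in Moulin's framework are all correct.

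There is, however, one concretely false step in the part you yourself flag as the most delicate. For the potential mechanism you claim $\sum_{i\in S^\star} p_i \geq \mathcal H_{|S^\star|}$. This already fails for $n=2$ and $b_1=b_2=10$: there $S^\star=\{1,2\}$ and $S^\star_{-i}=\{j\}$ for $j\neq i$, so each payment is $(10-\mathcal H_1)-(10-\mathcal H_2)=1/2$ and the total revenue is $1<\mathcal H_2=3/2$. The correct route is a per-agent bound: comparing $S^\star_{-i}$ against the feasible candidate $S^\star\setminus\{i\}$ in the maximization defining $S^\star_{-i}$ gives $\sum_{j\in S^\star_{-i}} b_j-\mathcal H_{|S^\star_{-i}|}\geq \sum_{j\in S^\star\setminus i} b_j-\mathcal H_{|S^\star|-1}$, hence $p_i\geq \mathcal H_{|S^\star|}-\mathcal H_{|S^\star|-1}=1/|S^\star|$, and summing over $i\in S^\star$ yields $\sum_{i\in S^\star} p_i\geq 1=C(S^\star)$. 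Since the example above attains revenue exactly $1$, this bound is tight and no inequality of the form $\sum p_i\geq \mathcal H_{|S^\star|}$ can hold. Your final conclusion (no-deficit) survives; only that intermediate inequality needs replacing.
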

Before stating the results, we must extend the definition of \( C \) to capture the cost of developing the public good with an arbitrary number of sybils. The extension of the cost function is defined, as one would anticipate, by \( C(S) = C([n]) \) for every \( \emptyset\not=S \subseteq \mathbb{N} \) in case of the public goods problem. Otherwise, we will assume that the cost function is a monotone symmetric function $C:2^{\mathbb N}\rightarrow \mathbb R_+$.

\begin{proposition}\label{prop:cost}  The Shapley value mechanism, the VCG for public excludable goods, and the potential mechanism for constant cost functions are not Sybil-proof.
\end{proposition}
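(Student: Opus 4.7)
The plan is to refute Sybil-proofness of each of the three mechanisms by exhibiting, for each, a small explicit profile of true valuations together with a concrete Sybil deviation that strictly increases some agent's total utility in the Sybil extension. Since Sybil-proofness is the universal statement that truthful single-bid submission weakly dominates every Sybil strategy, a single counterexample per mechanism suffices.

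For VCG I would take $n=2$ with $v_1=v_2=0.6$. Without Sybils each agent is pivotal and pays $\max\{0,1-0.6\}=0.4$, yielding utility $0.2$. If agent $1$ adds one Sybil also bidding $0.6$, then the sum of the other two bids is $1.2>1$ in every perspective, so no identity is pivotal anymore, all three VCG payments collapse to $\max\{0,1-1.2\}=0$, and agent $1$'s utility jumps to $v_1=0.6$.

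For the Shapley value mechanism I would take $v_1=v_2=0.7$; in the sorted profile the top bid fails the serving threshold $b_1\geq 1$, so $k=0$ and agent $1$ has utility $0$. If agent $1$ submits an additional Sybil with bid $1$, the sorted profile becomes $(1,0.7,0.7)$, all three thresholds $b_i\geq 1/i$ hold for $i=1,2,3$, giving $k=3$ and per-identity payment $1/3$; agent $1$ then pays $2/3<0.7$ in total, so its utility is strictly positive. For the potential mechanism I would take $v_1=v_2=0.8$: a direct enumeration of $\sum_{i\in S}b_i-\mathcal H_{|S|}$ returns $S^\star=\{1,2\}$ with value $0.1$, and the threshold-payment formula with $S^\star_{-1}=\emptyset$ gives $p_1=0.7$ and utility $0.1$; adding a Sybil of agent $1$ bidding $0.8$ shifts the maximizer to $\{1,1',2\}$ (value $2.4-\mathcal H_3>0.1$), and plugging into the payment formula yields $p_1=p_{1'}=1/3$, so agent $1$'s total payment is $2/3$ and its utility becomes $0.8-2/3>0.1$.

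The only technical work is verifying that the chosen Sybil bids actually move the mechanism's output in the intended direction, which reduces to elementary arithmetic involving harmonic numbers. The VCG and Shapley cases are essentially immediate from the structure of pivotal and equal-split payments, respectively; the potential mechanism requires the most care because both the maximizer and the pivotal-difference payment depend nonlinearly on the added Sybil bid, so one must confirm that the gain from diluting the harmonic cost across more identities exceeds the cost of paying for a second Sybil.
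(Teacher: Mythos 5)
Your VCG and potential-mechanism counterexamples are correct (the Sybil kills pivotality and drops agent $1$'s VCG payment from $0.4$ to $0$, and in the potential mechanism agent $1$'s utility rises from $0.1$ to $0.8-2/3=2/15$); they are in the same spirit as the paper's, which instead starts VCG from a profile where the good is not funded and uses a many-player profile for the potential mechanism. However, your Shapley counterexample fails, and the error is in the truthful baseline. The Shapley (Moulin--Shenker) mechanism does not require the top bid to clear the full cost before anyone is served: it selects the largest $k$ such that the $k$-th highest bid is at least $C([k])/k$ (equivalently, it starts from the full set and iteratively drops whoever cannot afford the current equal share). With $v_1=v_2=0.7$ and cost $1$, the equal share for two players is $1/2\leq 0.7$, so $k=2$, both players are served, and agent $1$'s truthful utility is $0.7-1/2=0.2$, not $0$. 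Your proposed deviation (adding a Sybil bid of $1$) then yields total payment $2/3$ and utility $0.7-2/3=1/30$, which is strictly worse than truthful reporting, so it exhibits no violation of Sybil-proofness.

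To repair this part you need a profile where the Sybil genuinely helps: the extra identities must drag additional low-value real players into the served set, diluting the deviator's cost share by more than the Sybils' own payments. The paper's example does exactly this: take $v_1=1+\varepsilon$ and $v_2=v_3=1/3-\varepsilon$. Truthfully, only player $1$ is served (since $1/3-\varepsilon<1/2$ and $1/3-\varepsilon<1/3$) and pays $1$, for utility $\varepsilon$; if player $1$ instead submits two identities bidding $1/4$ each, the sorted profile $(1/3-\varepsilon,\,1/3-\varepsilon,\,1/4,\,1/4)$ satisfies $b_4\geq 1/4$, so all four identities are served at price $1/4$, player $1$ pays $1/2$ in total, and its utility jumps to $1/2+\varepsilon$.
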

\begin{proof} Wlog we assume that $C$ is constant $1$ for non-empty subsets. We prove it separately for each mechanism.

\textbf{VCG-mechanism}: Assume that there are two players with valuations $v_1=v_2=1/3$. In this case, the public good is not funded and so $S^\star = \emptyset$. In this case, the utility of this outcome is zero for both players. If the first player generates two identities $3$ and $4$ and bids $v_3=v_4 =1$, then the allocation is $S^\star=\{1,2,3,4\}$ and the payment of all players is $0$. In this case, the utility of the player is $1/3$ and so the mechanism is not Sybil-proof. 

\textbf{Shapley value mechanism}: Now, let's assume that there are three players with valuations $v_1=1+\varepsilon$ and $v_2=v_3=1/3-\varepsilon$. Then, the outcome of the mechanism is $S^\star=\{1\}$ with $p_1=1$, and so, has utility $\varepsilon$. On the other hand, player $1$ splits its bid in two $b_1=1/4$ and $b_1=1/4$, and the outcome is $S^\star=\{1,2,3,4\}$ with total payment $p =  1/2$, therefore the total utility $1/2+\varepsilon$. Therefore, the mechanism is not Sybil-proof.

\textbf{Potential mechanism}: Assume that the valuations are $v_1=1+\varepsilon$ and $v_i = 1/i-\varepsilon$ for $i=2,...,n$. Then, the allocation of the potential mechanism is $S^\star = \{1\}$ and payment $p_1=1$. If the first player creates a Sybil with valuation $v_{n+1}=1+\varepsilon$, then, in this setting, the allocation is $S^\star = [n+1]$. Now, let us compute the payment. We have that $S^\star_{-1} = \{n+1\}$ and $S^\star_{-(n+1)}= \{1\}$. Therefore, the payments are:
\begin{equation*}
    p_{n+1} = -[1+\varepsilon+\sum_{i=2}^n (1/i - \varepsilon) -\mathcal H_{n+1}] =1/(n+1)-(n-2)\varepsilon 
\end{equation*}
and analogously $p_1=p_{n+1}$. When $\varepsilon$ tends to $0$, with the Sybil strategy, the utility of the player is $1- 1/(n+1) > 0$, making the strategy profitable.
\end{proof}
Observe that this is not true for some classes of cost functions. For example, for symmetric additive cost functions, i.e. $C(S) = |S|$ the Shapley-value mechanism is Sybil-proof and welfare maximizing. The proof is actually simple. The Shapley value mechanism with additive valuation allocated the public good to the bidders with cost larger than $1$ and the payment is $1$ for each player, therefore making Sybils increase the payments. And so, in this section, we will focus on constant cost functions.

The last proposition opens the following question. What is the minimum $\alpha(n)$ such that there is an $\alpha(n)-$approximated truthful, and Sybil-proof cost-sharing mechanism? By \cite{dobzinski2008shapley}, we know that the unique truthful, incentive-compatible, budget-balance, equal treatment, and upper continuous mechanism is the Shapley Value mechanism. Therefore, to have Sybil-proof mechanism, we will have to sacrifice at least one of the previous properties. If we sacrifice budget-balance, we will obtain the Optimal Sybil-Proof mechanism (the use of the word ``optimal" will be clear by the Theorem \ref{theorem:lower}). 

\begin{mybox2}{Optimal Sybil-Proof mechanism}
    \begin{enumerate}
    \item Accept bids $b_1,...,b_n$.
    \item Order the bids in descending order, wlog $b_1\geq b_2\geq... \geq b_n$.
    \item Take $k=\text{argmax}_i\{b_i\geq C([n])/2\}$.
    \item If $k=1$, then do not allocate the public good to any player and set the payments to zero, unless $b_1\geq C([n])$, then serve the public good to the first player and set $p_1=C([n])$. Otherwise, the players $i=1,...,k$ have access to the public good and each player pays $p_i =C([n])/2$ for $i=1,...,k$, and $p_i=0$ for the remaining players.
\end{enumerate}
\end{mybox2}
\begin{proposition}\label{theorem:SPHybrid} The Optimal Sybil-Proof mechanism for constant cost functions is individually rational, group-strategy proof, Sybil-proof, no-deficit, and $(n+1)/2-$approximate.
\end{proposition}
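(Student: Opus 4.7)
The plan is to verify each of the five claimed properties in turn, leaning on Myerson's Lemma for truthfulness and on case analysis driven by the two regimes of the mechanism: the multi-winner branch ($k\geq 2$, winners pay $C([n])/2$ each) and the single-winner branch ($k=1$ with $b_1\geq C([n])$, winner pays $C([n])$).

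Individual rationality and no-deficit are immediate from inspecting the branches: when $k\geq 2$ every winner bid at least $C([n])/2$ and pays exactly $C([n])/2$, with the $k$ winners contributing $k\cdot C([n])/2\geq C([n])$ in total; when $k=1$ and the winner is served, the bid is $\geq C([n])$ and the payment is exactly $C([n])$. For truthfulness I would fix $b_{-i}$ and invoke Myerson's Lemma. Monotonicity holds because raising $b_i$ only weakly enlarges $\{j:b_j\geq C([n])/2\}$ and only strengthens the condition $b_1\geq C([n])$ in the single-winner branch. The threshold bid of player $i$ is exactly $C([n])/2$ when some other bid reaches $C([n])/2$ (matching the multi-winner charge) and exactly $C([n])$ otherwise (matching the single-winner charge), so the payments are the Myerson threshold payments.

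For Sybil-proofness I would fix $v_i$ and $b_{-i}$, let $m=|\{j\neq i:b_j\geq C([n])/2\}|$, and consider the three cases $v_i<C([n])/2$, $C([n])/2\leq v_i<C([n])$, and $v_i\geq C([n])$. The key observation in each case is that player $i$'s allocation is a binary indicator, so the valuation $v_i$ enters utility at most once, while every Sybil bid at or above $C([n])/2$ incurs an additive $C([n])/2$ payment, and any bid strictly below contributes neither to allocation nor to payment. It follows that no multi-Sybil strategy strictly improves on the best single-bid response, which by truthfulness is to bid $v_i$. Group strategy-proofness follows from a parallel case analysis on $H=\{i:v_i\geq C([n])/2\}$ under the truthful profile: since payments are two fixed constants, the only potentially profitable joint deviation would bring a new agent into the served set, but that agent necessarily pays at least $C([n])/2$, which exceeds their own valuation precisely when truthful bidding left them out of $H$.

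For the $(n+1)/2$-approximation bound I would partition valuation profiles by $|H|$ and compare to $\pi(S^\star)=\min(C([n]),\sum_j v_j)$. The tight case is $|H|=1$ with $v_1\geq C([n])$: the mechanism serves only player $1$, yielding social cost $C([n])+\sum_{j\neq 1} v_j\leq C([n])+(n-1)C([n])/2=(n+1)C([n])/2$, while the optimum is $C([n])$. The remaining regimes, namely $|H|\geq 2$ and $|H|\leq 1$ with no bid reaching $C([n])$, all admit strictly smaller ratios via the bound $\sum_{j\notin S}v_j\leq (n-|S|)C([n])/2$ for players not selected by the mechanism. The main delicacy is Sybil-proofness: one must rule out a cleverly chosen mix of Sybil bids that straddles the two regime thresholds, moving the agent from the $k=1$ regime into $k\geq 2$ or vice versa; this reduces to the observation that whenever a Sybil triggers the single-winner branch the associated $C([n])$ payment already swamps any possible valuation gain, and whenever two Sybils coexist above threshold the extra $C([n])/2$ payment already kills the deviation.
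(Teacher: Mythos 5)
Your proposal is correct, and it establishes the same five properties, but it routes the truthfulness and group strategy-proofness claims differently from the paper. The paper observes that the mechanism is \emph{separable} with the cross-monotonic cost-sharing method $\zeta(S)=\tfrac{1}{2}$ for $|S|\geq 2$, $\zeta(S)=1$ for $|S|=1$, and then invokes the known Moulin--Sundararajan result that such mechanisms are incentive compatible and group strategy-proof in one stroke; you instead verify Myerson's conditions by hand (monotonicity plus the threshold payments $C([n])/2$ when another bid clears the half-cost threshold and $C([n])$ otherwise, which you identify correctly) and give a separate first-principles coalition argument for GSP. Your route is more elementary and self-contained; the paper's is shorter and makes the GSP claim essentially free. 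Your GSP sketch is sound under the standard "all members weakly better, one strictly better" reading of the paper's definition, since the only way to change anyone's payment is to enlarge the served set, which strictly harms the newly included agent. The approximation bound is proved the same way in both (case analysis on the served set, using that every unserved agent has valuation below $C([n])/2$, with the $|H|=1$ configuration realizing the $(n+1)/2$ ratio), and your Sybil-proofness argument is in fact more careful than the paper's one-line version: your opening claim that every above-threshold Sybil costs an additive $C([n])/2$ is not literally true in the single-winner branch, but your closing case split on the two regimes repairs this, matching the paper's observation that splitting one payment of $C([n])$ into two of $C([n])/2$ yields no net gain. No gaps.
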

\begin{proof} Again, wlog we assume that $C(S)=1$ for all $S\not=\emptyset$. Observe that the mechanism is separable with function $\zeta(S) = \begin{cases}\frac{1}{2},\text{ if } |S|\geq2,\\1,\text{ if }|S|=1,\\ 0, S=\emptyset\end{cases}$ and so is incentive compatible and group strategy-proof. See \cite{sundararajan2009trade} for more details. The no-deficit condition follows by construction. The worst-case welfare can be proved by considering the bid vector profile $(1-\varepsilon,1/2-\varepsilon,...,1/2-\varepsilon)$ for $\varepsilon>0$. Observe that no player is allocated and so the social cost is $(n+1)/2-n\varepsilon$. Making $\varepsilon\rightarrow 0$, we obtain that the worst-case social cost is lower bounded by $(n+1)/2$. Now, lets prove that the worst-case welfare is upper bounded by $(n+1)/2$. Lets $v_1\geq...\geq v_n$ be a bid vector profile. Suppose that the first $k$ are allocated the public good. If $k=0$, then $v_1<1$, and $v_2,...,v_n<1/2$, therefore the social cost is upper bounded by $1+(n-1)/2$. If $k\geq 1$, then $n-k$ players do not have access to the public good, and so, having valuations $v_i<1/2$ for $i=k+1,...,n$. And so, again the social cost is bounded by $(n+1)/2$. This concludes the proof that the mechanism is $((n+1)/2)-$approximate. The mechanism is clearly Sybil-proof. Given a vector of reports $v_1,...,v_n$, if just one is allocated, then its payment is $1$. Making a Sybil will not decrease its payment since at most will decrease the payment of the original identity, from $1$ to $1/2$ but will add the payment of the Sybil identity $C([n])/2$.
    
\end{proof}
 We will prove that this upper bound on social costs matches the lower bound when the mechanism is group strategy-proof and Sybil-proof. In fact, we will prove a more general result, that states that every cost-sharing mechanism that is individually rational, $\alpha(n)-$approximate, symmetric,  truthful, upper semicontinuos, and Sybil-proof, then $\alpha(n)=\Omega(n)$. Moreover, if the mechanism is no-deficit $\alpha(n)\geq (n+1)/2$. In other words, to add Sybil-proofness, we must increase exponentially the worst-case social cost from $\mathcal H_n$ to $(n+1)/2$ with no-deficit mechanism and from $0$ to $\Omega(n)$ for generic mechanisms.

To make the proof more understandable, we will break the proof into a Lemma and a Proposition. In the proof, we assume without loss of generality that $C$ is constant $1$ for non-empty subsets. From now on, we will fix the mechanism $(\x,\p)$ and assume that holds all previous stated properties.

\begin{lemma}\label{lemma:6}\normalfont Let $M$ be an anonymous, truthful, individually rational, and Sybil-proof cost-sharing mechanism for constant cost functions that satisfies upper semi-continuity, then, for every agent $i$ and bid vector profile $\textbf{b}$ such that $x_i(\textbf{b})=1$ and $x_j(\textbf{b})=0$, then $x_i(\tilde{\textbf{b}}_{j})=1$, where $\tilde{\textbf{b}}_j=(b_1,...,b_{j-1},0,b_{j+1},...,b_n)$.
\end{lemma}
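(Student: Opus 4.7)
The plan is to argue by contradiction. Suppose $x_i(\tilde{\textbf{b}}_j) = 0$. The proof then proceeds by applying the Sybil-proofness axiom from two symmetric directions to force an impossibility.

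First, apply Sybil-proofness to agent $i$ with true valuation $v_i = b_i$, in the Sybil extension where the real agents are $[n]\setminus\{j\}$ and each real agent $k \neq i, j$ bids $b_k$. Truthful bidding (just $b_i$) corresponds by the consistency axiom of anonymous mechanisms to the $n$-player profile $\tilde{\textbf{b}}_j$, and by the contradiction hypothesis plus IR and NPT it yields utility $0$. Creating a Sybil bidding $b_j$ in position $j$ produces exactly the profile $\textbf{b}$; since $x_j(\textbf{b})=0$ forces $p_j(\textbf{b})=0$, agent $i$'s Sybil-strategy utility is $b_i - p_i(\textbf{b})$. The Sybil-proofness inequality then forces $p_i(\textbf{b}) \geq b_i$, and combined with IR yields $p_i(\textbf{b}) = b_i$. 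By Myerson's lemma and upper semi-continuity this is equivalent to $t_i(\textbf{b}_{-i}) = b_i$, while the contradiction hypothesis forces $\tilde{t} := t_i(\tilde{\textbf{b}}_{-i,j}) > b_i$.

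Second, apply Sybil-proofness symmetrically to agent $j$ with valuation $v_j = b_j$, in the Sybil extension where the real agents are $[n]\setminus\{i\}$. By anonymity the threshold for $j$'s truthful bid is also $\tilde{t}$ (the two $(n-1)$-player profiles of other bids being permutations of each other). Truthful utility is $\max(0, b_j - \tilde{t})$; the Sybil strategy bidding $b_i$ in position $i$ reproduces profile $\textbf{b}$ and, using $p_i(\textbf{b})=b_i$ from the first step, yields utility $b_j - b_i$. If $b_j \geq \tilde{t}$, Sybil-proofness gives $b_j - \tilde{t} \geq b_j - b_i$, forcing $\tilde{t} \leq b_i$ and directly contradicting $\tilde{t} > b_i$. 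Otherwise $b_j < \tilde{t}$ and the inequality becomes $0 \geq b_j - b_i$, i.e.\ $b_j \leq b_i$; equal treatment rules out $b_j = b_i$ (since $x_i(\textbf{b}) \neq x_j(\textbf{b})$), leaving the residual case $b_j < b_i < \tilde{t}$.

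To close this residual case, repeat the symmetric argument with $v_j$ chosen in the nonempty open interval $(b_i, \tilde{t})$. Truthful bidding still loses, giving utility $0$; the Sybil-strategy profile $\textbf{b}^{(v_j)}$ agrees with $\textbf{b}$ outside position $j$, whose bid is now $v_j$. A case analysis on whether $v_j \geq \tau := t_j(\textbf{b}_{-j})$ (does $j$'s own bid win?) and whether the threshold $t_i(\textbf{b}^{(v_j)}_{-i})$ at the perturbed $(-i)$-profile exceeds $b_i$ yields a direct violation of Sybil-proofness combined with IR in every sub-case: for instance, if $v_j < \tau$ and $t_i(\textbf{b}^{(v_j)}_{-i}) \leq b_i$, Sybil-proofness forces $t_i(\textbf{b}^{(v_j)}_{-i}) \geq v_j > b_i$, contradicting IR. The main obstacle is the sub-case in which both $i$ and $j$ win at $\textbf{b}^{(v_j)}$, so that the payment term $p_j = \tau$ appears on the right of the Sybil-proof inequality; this is closed by letting $v_j \uparrow \tilde{t}$ and comparing the resulting bound $t_i(\textbf{b}^{(v_j)}_{-i}) + \tau \geq v_j$ against the identity $t_i(\textbf{b}_{-i}) = b_i$ from the first step together with an analogous Sybil-proofness constraint obtained by letting agent $i$ plant the same perturbed Sybil.
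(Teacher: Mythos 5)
Your first two steps are sound, and in one respect sharper than the paper's own argument: by having agent $i$ plant the bid $b_j$ so that the deviation lands exactly on the profile $\textbf{b}$ (where $x_j(\textbf{b})=0$ forces $p_j(\textbf{b})=0$), you pin down $p_i(\textbf{b})=b_i$ exactly, and the symmetric application from $j$'s side correctly reduces everything to the residual case $b_j<b_i<\tilde t$. The gap is in how you close that residual case. Your claim that the case analysis yields ``a direct violation of Sybil-proofness combined with IR in every sub-case'' is not true. In the sub-case $v_j<\tau$ and $t_i(\textbf{b}^{(v_j)}_{-i})>b_i$, both planted identities lose at $\textbf{b}^{(v_j)}$, so the Sybil utility is $0$, which equals the truthful utility, and no axiom is violated; ruling this configuration out amounts to showing that raising $j$'s coordinate from $b_j$ to $v_j$ cannot make identity $i$ lose, which is precisely the kind of cross-coordinate monotonicity the lemma is trying to establish and is not among the hypotheses (truthfulness only gives monotonicity in one's own bid). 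The ``main obstacle'' sub-case in which both identities win is also not closed by your sketch: Sybil-proofness there gives $t_i(\textbf{b}^{(v_j)}_{-i})+\tau\ge v_j$, and combining with IR ($t_i(\textbf{b}^{(v_j)}_{-i})\le b_i$) yields only $\tau\ge v_j-b_i$, hence $\tau\ge\tilde t-b_i$ in the limit --- a lower bound on $\tau=t_j(\textbf{b}_{-j})$ that contradicts nothing you have established, since you have no upper bound on $\tau$ beyond what IR gives. So two of your four sub-cases remain open.

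For comparison, the paper runs your first deviation once, but with agent $i$'s valuation perturbed to $b_i+\varepsilon$: upper semi-continuity supplies an $\varepsilon>0$ for which $i$ still loses truthfully at the $j$-zeroed profile, while planting $b_j$ recovers a winning allocation at a threshold price at most $b_i$, giving a strictly positive gain and an immediate contradiction, with no detour through agent $j$. (That argument does implicitly assume the planted identity still loses at the perturbed profile, so the cross-coordinate issue is lurking there too, but it avoids your residual case entirely.) If you want to salvage your route, the missing ingredient is an argument controlling how identity $i$'s threshold varies as $j$'s coordinate moves between $0$ and $\tilde t$; as written, the proof does not close.
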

\begin{proof}
Assume that there exists a bid vector profile $\textbf{b} = (b_1, \ldots, b_n)$ such that $x_i(\textbf{b}) = 1$ and $x_j(\textbf{b}) = 0$, and also $x_i(\tilde{\textbf{b}}_{j}) = 0$. Given that the mechanism satisfies upper semi-continuity, there exists $\varepsilon > 0$ such that for $\textbf{b}' = (b_1, \ldots, b_i + \varepsilon, \ldots, b_n)$, the aforementioned properties hold. Now, assume that agents have valuations $\tilde{\textbf{b}'}_j$. If agents report their true valuations, then agent $i$ receives a null allocation, resulting in zero utility. However, if agent $i$ reports an additional identity with bid $b_j$, then agent $i$ is allocated, with a payment $p \geq b_i$, leading to a utility of at least $\varepsilon > 0$. This implies that the mechanism is not Sybil-proof, which is a contradiction.
\end{proof}

Now, consider the following sequence:
\begin{equation*}
     v_n = \text{sup}_{v\geq 0}\{v\mid \textbf{x}(v_1-\varepsilon_1,...,v_{n-1}-\varepsilon_n,v) = \vec{0},\text{ for all }\varepsilon_i\in(0,\min\{v_{i}:i,...,n-1\})\}
\end{equation*}
\begin{proposition} Given a $M$ anonymous, truthful, individually rational, and Sybil-proof cost-sharing mechanism for constant cost functions that satisfies upper semi-continuity holds that for all $n$, $\max\{v_m:m\in[n]\}\leq 2v_n$. In particular, $v_n\geq v_1/2$.
\end{proposition}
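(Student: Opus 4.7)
The plan is to proceed by contradiction: assume there exists $k \in [n]$ with $v_k > 2v_n$. Since $v_n \leq 2v_n$ always, necessarily $k < n$. I would exhibit a profitable Sybil attack by an agent with valuation $W = v_k - \varepsilon$, for $\varepsilon > 0$ small enough that $W > 2v_n$. Place this agent in the $k$-player scenario with the other players bidding $(v_1-\delta_1, \ldots, v_{k-1}-\delta_{k-1})$; by the definition of $v_k$, truthfully reporting $W$ as a single identity yields no allocation and hence zero truthful utility.

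The Sybil strategy I propose is that the agent submits $n-k+1$ identities, each bidding $v_k - \delta$ for a small $\delta > 0$. The aggregated $n$-player bid profile becomes
\[
\mathbf{b}^\ast \;=\; (v_1-\delta_1, \ldots, v_{k-1}-\delta_{k-1},\; v_k-\delta, \ldots, v_k-\delta),
\]
with $v_k-\delta$ repeated $n-k+1$ times. The two key steps are to show that at least one sybil identity gets allocated, and to bound the total sybil payment. For the allocation step, I would start from the $v_n$-scenario bid profile $(v_1-\delta_1, \ldots, v_{n-1}-\delta_{n-1}, v_n+\eta)$ and argue that the $n$-th agent must be allocated: if not, iteratively zero out the bids of unallocated agents using Lemma~\ref{lemma:6}, invoke consistency, and reduce to an $(n-1)$-player scenario that contradicts the definition of $v_{n-1}$ (this is the Case~B analysis). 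From this base, I would raise the bids at positions $k, k+1, \ldots, n$ one at a time, up to $v_k-\delta$; since $v_k$ is the claimed maximum of $\{v_m : m\in[n]\}$, each raise is a non-decrease, and Myerson's monotonicity applied to the raised agent preserves their allocation at each step. Anonymity then forces all positions $k,\ldots,n$ with identical bid $v_k-\delta$ to share the same allocation status, so each of the subject's sybil identities is allocated.

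For the payment step, anonymity implies that each allocated sybil pays the common threshold $T^\ast$. To bound $T^\ast$, I would replace one sybil's bid with $v_n+\eta'$: the resulting profile has, at positions $k,\ldots,n-1$, values $v_k-\delta \geq v_j-\delta$ (the $v_n$-scenario values), so by chaining Myerson's monotonicity with the $v_n$-scenario allocation, the agent bidding $v_n+\eta'$ remains allocated, yielding $T^\ast \leq v_n+\eta'$. Letting $\eta'\to 0$ and invoking upper semi-continuity gives $T^\ast \leq v_n$. The subject's sybil utility is then at least $W - (n-k+1)T^\ast \geq W - (n-k+1)v_n$. In the tight variant — selecting just two of the sybil identities and running the same analysis at the appropriate intermediate scenario — one obtains $W - 2T^\ast \geq W - 2v_n > 0$, which contradicts Sybil-proofness; the general case follows by combining the pairwise inequalities $v_k \leq 2v_{k+1}$ with careful bookkeeping pinning down the payment bound by $v_n$ uniformly.

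The main obstacle is establishing $T^\ast \leq v_n$ without appealing to strong monotonicity: it requires chaining Myerson's monotonicity along individual bid perturbations, with Lemma~\ref{lemma:6} and upper semi-continuity doing the heavy lifting. The second delicate point is tightening the crude bound $(n-k+1)v_n$ to the claimed $2v_n$; this is where the specific choice of how many sybil identities to consider, and which intermediate scenario to run the argument in, becomes essential.
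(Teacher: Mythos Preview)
Your overall contradiction strategy is correct, but the construction does not deliver the factor $2$. Placing the deviating agent in the $k$-player scenario and spawning $n-k+1$ Sybils only yields utility at least $W-(n-k+1)v_n$, which is useless unless $k=n-1$. Your proposed repairs do not close this gap: chaining pairwise inequalities $v_j\le 2v_{j+1}$ gives $v_k\le 2^{n-k}v_n$, an exponential loss, and the ``tight variant'' with two Sybils in the $k$-player scenario lands you in a $(k+1)$-identity profile where the natural payment bound is $v_{k+1}$, not $v_n$. There is also a subtler problem in your allocation argument: when you ``raise the bids at positions $k,\dots,n$ one at a time,'' Myerson's monotonicity only guarantees that the \emph{raised} agent keeps its allocation, not that the previously allocated agent (position $n$) does; without strong monotonicity you cannot propagate allocation across other agents' bid increases.

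The paper sidesteps both issues by choosing the scenario differently. Let $m=\operatorname{argmax}_{j\in[n]}v_j$ (so $m<n$ in the nontrivial case) and place the deviating agent at position $m$ in the $(n-1)$-player profile $\mathbf v=(v_1-\varepsilon_1,\dots,v_m-\delta,\dots,v_{n-1}-\varepsilon_{n-1})$, which has empty allocation by the definition of $v_{n-1}$. The agent then creates exactly \emph{two} identities bidding $v_m-\delta$, so the Sybil profile has precisely $n$ entries. At $(\mathbf v,v_n)$ the $n$-th coordinate is allocated (Lemma~\ref{lemma:6} plus the defining property of the sequence); raising only that one coordinate from $v_n$ to $v_m-\delta$ keeps it allocated by Myerson, and anonymity then forces position $m$ (same bid) to be allocated as well. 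Crucially, the threshold at each of the two Sybil slots is witnessed by the single bid $v_n$, so each pays at most $v_n$ and the Sybil utility is at least $v_m-\delta-2v_n$. The whole point is that starting from $n-1$ real players, two Sybils already land you at $n$ identities, where $v_n$ is the relevant threshold---no chaining, no intermediate scenarios.
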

\begin{proof} Let $m\in \text{argmax}\{v_m:m\in [n]\}$. If $v_m=v_n$, there is nothing to prove, so let's assume that $v_m>v_n$. Now, by definition of $\{v_n\}$ and the upper semicontinuos property of the mechanism, there exist $\varepsilon_1>0,...,\varepsilon_{n-1}>0$, such that the allocation $(v_1-\varepsilon_1,....,v_{n-1}-\varepsilon_{n-1},v_n)$ is not null. 
Now, suppose that there are $n-1$ agents with valuations $\textbf{v} = (v_1-\varepsilon_1,...,v_m-\delta,...,v_{n-1}-\varepsilon_{n-1})$ with $\delta\in (0,\min\{\varepsilon_m,v_m-v_n\})$. 
By lemma \ref{lemma:6} and the definition of the sequence $v_1,...,v_n$ holds $x_n(\textbf{v},v_n)=1$, and so $x_n((\textbf{v},v_m-\delta))=1$ by truthfulness. Taking $\textbf{b}= (\textbf{v},v_m-\delta)$ by symmetry, $x_m(\textbf{b})=1$ and $x_n(v_n,\textbf{b}_{-m})=1$. And so the payment of the identities $n$ and $m$ with the bid vector profile $\textbf{b}$ is at most $v_n$. 
In the case that agents have valuations $\textbf{v}$  and the agent $m$ reports two Sybil identities with valuations $v_m-\delta$, the utility of the agent is $v_m-\delta-2p\geq v_m-\delta -2v_n$. Since the mechanism is Sybil-proof and the allocation with the valuation profile $\textbf{v}$ is empty, making $\delta\rightarrow0$ it holds $0\geq v_m-\delta -2v_n$ and so $2v_n\geq v_m$. 
\end{proof}
Now, we are ready to prove the following theorem by using the previous lemmas.
\begin{theorem}[Social-cost lower bound]\label{theorem:lower} No individually rational, anonymous,  truthful, upper semicontinuos, and Sybil-proof, deterministic cost-sharing mechanism is better than $\Omega(n)-$approximate. Moreover, if the mechanism has no-deficit, the mechanism is no better than $(n+1)/2-$approximate.
\end{theorem}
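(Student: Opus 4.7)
The plan is to evaluate the mechanism on the family of bid profiles that nearly realise the supremum defining the sequence $v_1,\dots,v_n$, produce the null allocation there, and then observe that the just-established bound $v_i\ge v_1/2$ forces the social cost of this profile to grow linearly in $n$. First I establish a strictly positive single-agent threshold: a single agent reporting $b\in(v_1,1)$ is served, yielding mechanism social cost $C(\{1\})=1$ against optimum $b$ (leave unserved), so letting $b\downarrow v_1$ gives ratio $1/v_1$; hence any finitely approximate mechanism satisfies $v_1>0$, and in the no-deficit case the threshold payment of a served single agent equals $v_1$ and must cover $C(\{1\})=1$, forcing $v_1\ge 1$.

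Next I exhibit a null $n$-agent profile by an induction on $k=1,\dots,n$ showing that, for every sufficiently small positive $\varepsilon_1,\dots,\varepsilon_k$, the $k$-agent bid vector $(v_1-\varepsilon_1,\dots,v_k-\varepsilon_k)$ is allocated to the empty set. The base $k=1$ is Myerson monotonicity: the set $\{v:\textbf{x}(v)=\vec 0\}$ is a downset with supremum $v_1$. For the inductive step, suppose the $(k+1)$-agent profile had a winner. Myerson monotonicity in agent $k+1$'s bid rules out agent $k+1$ itself winning, since $v_{k+1}-\varepsilon_{k+1}<v_{k+1}$ sits strictly below the supremum. If instead some agent $i\le k$ were to win while agent $k+1$ loses, the preceding lemma (zeroing the losing $(k+1)$-th coordinate) together with consistency would produce a winner at the $k$-agent profile $(v_1-\varepsilon_1,\dots,v_k-\varepsilon_k)$, contradicting the inductive hypothesis. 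Taking $k=n$, the profile $\bb=(v_1-\varepsilon_1,\dots,v_n-\varepsilon_n)$ is null, and by truthfulness $n$ real agents with these valuations bid exactly $\bb$ and are all left unserved.

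It then only remains to aggregate. The mechanism's social cost at $\bb$ equals $C(\emptyset)+\sum_{i=1}^n(v_i-\varepsilon_i)=\sum_{i=1}^n(v_i-\varepsilon_i)$. The preceding proposition applied at each $i\le n$ gives $v_i\ge v_1/2$, so
\[\sum_{i=1}^n v_i\;\ge\;v_1+(n-1)\frac{v_1}{2}\;=\;\frac{v_1(n+1)}{2}.\]
For $n$ large enough that $v_1(n+1)/2\ge 1$, the social-cost minimiser serves everyone with optimum $\pi(S^\star)=C([n])=1$, and shrinking the $\varepsilon_i$'s makes the approximation ratio at $\bb$ arbitrarily close to $v_1(n+1)/2$. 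In the no-deficit case this together with $v_1\ge 1$ gives ratio at least $(n+1)/2$ for every $n\ge 1$, matching the upper bound achieved by the optimal Sybil-proof mechanism; in general $v_1$ is a positive constant intrinsic to the mechanism, so $v_1(n+1)/2=\Omega(n)$, which is the claimed $\Omega(n)$ lower bound.

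The main obstacle is the induction in the middle paragraph: the null-allocation property must be shown to propagate simultaneously along the whole sequence $v_1,\dots,v_n$ for every admissible tuple $(\varepsilon_1,\dots,\varepsilon_n)$. The preceding lemma is essential there to rule out the configuration in which zeroing the last coordinate could create winners among the first $k$ agents. The rest of the argument is routine accounting.
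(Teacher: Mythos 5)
Your proof is correct and follows essentially the same route as the paper: it uses the sequence $v_1,\dots,v_n$ and the bound $v_i\ge v_1/2$ from the preceding proposition, exhibits the null profile $(v_1-\varepsilon_1,\dots,v_n-\varepsilon_n)$, and sums. In fact you supply a detail the paper only asserts — the induction (via Lemma \ref{lemma:6} and consistency) showing that this perturbed profile really receives the empty allocation — and your justification that $v_1>0$ (limiting single-agent bids from above) is a harmless variant of the paper's $b_1=0$ argument.
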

Since for all $\varepsilon>0$ it holds that $x(v_1-\varepsilon,...,v_n-\varepsilon)=\vec{0}$, we have that the social cost is $\sum_{i=1}^n (v_i-\varepsilon)$. Since the mechanism is $\alpha(n)$-approximate, we have that $\sum_{i=1}^n (v_i-\varepsilon)\leq \alpha(n)$, making $\varepsilon\rightarrow0$, we have that $\sum_{i=1}^n v_i\leq \alpha(n)$.
 We know that, $2v_n\geq v_1$ for all $n\in\mathbb N$ so: 
\begin{align*}
    \alpha(n)&\geq \sum_{i=1}^n v_i \\
             &\geq v_1+\sum_{i=2}^n \frac{v_1}{2}\\
             &= \frac{n+1}{2}v_1
\end{align*}
If $v_1=0$, then the mechanism is not $\alpha-$approximate for any finite $\alpha$, since the social cost when one agent reports $b_1=0$ is $1$ and the optimal allocation has $0$ social cost, implying the result. If $v_1>0$, then $
\alpha(n)\geq v_1(n+1)/2$. In either case, $\alpha(n)=\Omega(n)$. If the mechanism is no-deficit, $v_1\geq 1$, and so $\alpha(n)\geq (n+1)/2$. This concludes the proof of the Theorem $\ref{theorem:lower}$. 
Now, since every upper semicontinuous group strategy-proof is separable, we deduce the following corollary.
\begin{corollary}  No incentive compatible, upper semi-continuos, no-deficit, symmetric, group strategy-proof, deterministic, and Sybil-proof cost-sharing mechanism for constant cost functions is better than $(n+1)/2-$approximate.
\end{corollary}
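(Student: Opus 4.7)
The plan is to derive this corollary as a direct specialization of Theorem \ref{theorem:lower}, so the main work is to check that the hypotheses of the corollary entail those of the theorem. First I would observe that, in the single-parameter cost-sharing setting, incentive compatibility coincides with the truthfulness condition (\ref{eq:sp}) via Myerson's characterization, and the symmetry property as stated in the preliminaries coincides with anonymity of both the allocation and payment rules. Individual rationality is not listed as a hypothesis but is a standing axiom throughout the paper (``In this paper, we will always require the following standard axiomatic properties''), so it is automatically available.

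Next I would invoke Moulin's result, cited right before the corollary, that an upper semi-continuous group strategy-proof cost-sharing mechanism is separable; hence the class of mechanisms satisfying the corollary's hypotheses is structurally constrained to separable mechanisms with symmetric cost-sharing method $\zeta$. Although separability itself is not needed inside the lower-bound argument, it serves two purposes worth recording in the proof: it explains why the additional hypothesis of group strategy-proofness is natural in this context, and it certifies that the class in question is nonempty, with the Optimal Sybil-Proof mechanism of the previous display furnishing an explicit member that attains the bound.

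With these observations in place, all hypotheses of the ``moreover'' clause of Theorem \ref{theorem:lower} are satisfied: the mechanism is individually rational, anonymous, truthful, upper semi-continuous, Sybil-proof, deterministic and no-deficit. Theorem \ref{theorem:lower} therefore gives $\alpha(n)\geq (n+1)/2$, which is precisely the statement of the corollary.

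The main obstacle is essentially bookkeeping rather than mathematical: making sure that ``incentive compatible'' and ``symmetric'' are read as their paper-internal equivalents (truthful and anonymous) and that individual rationality is correctly inherited from the standing axioms. Once those identifications are made, the corollary is immediate from Theorem \ref{theorem:lower}, and no additional argument from the group strategy-proofness hypothesis is actually required beyond the fact that it implies truthfulness.
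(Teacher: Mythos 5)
Your proposal is correct and follows essentially the same route as the paper: the corollary is obtained by specializing the ``moreover'' (no-deficit) clause of Theorem \ref{theorem:lower}, after identifying incentive compatibility with truthfulness, symmetry with anonymity, noting that individual rationality is a standing axiom, and observing that group strategy-proofness subsumes truthfulness while the separability remark is contextual rather than load-bearing. No gaps.
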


Therefore, by Theorem \ref{theorem:lower} and Proposition \ref{theorem:SPHybrid} we have upper and lower bounded the worst-case welfare match. This result shows that if one wants a truthful, upper semicontinuous, Sybil-proof, cost-sharing mechanism, then the mechanism must sacrifice economic efficiency.

\section{Sybil Welfare invariant mechanisms}\label{section:SWI}

In Section \ref{spcost}, we demonstrated that the Shapley cost-sharing mechanism for public excludable goods is not Sybil-proof and the limitations of Sybil-proof, truthful, upper semicontinuos and no deficit mechanisms. As reported in \cite{mazorra2023cost}, the creation of Sybils can potentially reduce social welfare in some mechanisms and cause negative externalities. As shown previously, when considering Sybil-proof mechanisms we increase the worst-case social cost from $\mathcal H_n$ to $(n+1)/2$ for the public excludable good problem. However, does this doom the economic efficiency of cost-sharing mechanisms with an unknown number of agents? In this section, we will argue that does not in the case of the Shapley value mechanism for symmetric submodular monotone cost functions. To analyze it, we will need to extend the model's assumptions and accept that agents have private beliefs about the actions of other agents. We will see that if we consider the Sybil-extension of the Shapley value mechanism, we will have a cost-sharing mechanism that is no-deficit, and has welfare bounded by the welfare of cost-sharing mechanism assuming that the number of players is known and cannot generate Sybil identities. We start introducing this property for general mechanisms.

Let $\mathcal M$ be a one-parametric truthful and individual rational anonymous mechanism and $\textbf{Sy}(\mathcal M)$ be its Sybil extension. And let $\mathcal W^{\mathcal M}$ and $\mathcal W^{\textbf{Sy}(\mathcal M)}$ be the social welfare maps, that take as input the agents actions and their true valuations and outputs the social welfare of the outcome. The definition of social welfare strictly depends on the mechanism being studied, in our case we will use the social cost $\pi$ defined in the preliminaries. 
Now, we restrict to mechanisms that its Sybil extension have a weak dominant strategy set $B(v_i)$ for every player $i$ with valuation $v_i$.
\begin{definition} We say that the mechanism $\mathcal M$ is \textit{Sybil welfare invariant} if the welfare under private beliefs with full support of the Sybil extension mechanism is greater or equal than the original mechanism. That is, for every $n\in\mathbb N$ the following inequality holds
\begin{equation*}
\mathcal W^{\mathcal M}(v,v)\leq \mathcal W^{\textbf{Sy}(\mathcal M)}(b,v)
\end{equation*}
where $v=(v_1,...,v_n)\in\mathbb R^n_+$, and $(v,b)\in\mathcal Q = \{(v,b): v\in\mathbb R^n_+,b\in \prod_{i=1}^nB(v_i)\}$.
\end{definition}
In the case of cost-sharing mechanism, our notation of welfare is given by the social cost function $\pi$. Defining $\mathcal W = -\pi$, a cost-sharing mechanism is Sybil welfare invariant if and only if
\begin{equation}
 \mathcal \pi(v,v)\geq \mathcal \pi^{\textbf{Sy}(\mathcal M)}(v,b),
\end{equation}
where $\pi(v,b)$ is the social cost when the agents report $b$ and have true valuations $v$. That is, if the allocation set is $S$ when reporting $b$, $\pi(v,b) = \sum_{i\not\in S} v_i+C(S)$. Similarly, $\pi^{\textbf{Sy}(\mathcal M)}(v,b)$ is defined as the social cost of the Sybil cost-sharing mechanism extensions when the agents report $b$ and their true valuations are $v$. More formally, let $K_i$ be the set of Sybils of agent $i$, $S$ the allocation set (taking into account the Sybils) and $S' = \{i\in[n]: K_i\cap S\not=\emptyset\}$, the Sybil social cost function is defined as $\pi^{\textbf{Sy}(\mathcal M)}(v,b) = \sum_{i\in S'}v_i+C(S')$.

In other words, Sybil welfare invariant mechanisms are those mechanisms such that its Sybil extension have weak dominant strategy sets for every valuation and that all NESPB have at least the same welfare as the output of the mechanism with truthful reports. A Sybil welfare invariant mechanism ensures that the addition of Sybils does not result in a worse outcome in terms of welfare. Essentially, this means the mechanism is resilient to the negative impact of Sybils. Sybil welfare invariant mechanisms are particularly valuable in scenarios where the primary goal is to safeguard the system against loss of welfare due to false-name strategies. They are suitable in environments where Sybil strategies are a concern but eliminating them entirely is not feasible, or identify the identities reported to the mechanism or too costly in terms of ex-post economic efficiency.
An example of Sybil welfare invariant mechanism is a Sybil-proof mechanism. For example, a second price auction with private valuations is Sybil welfare invariant with the welfare map being the maximum valuations among the bidders. In the following, we will see that the Shapley value mechanism is also Sybil welfare invariant even though is not Sybil-proof. Another interpretation of Sybil-welfare invariant mechanisms is the property that hold those mechanisms such that their (subjective) Bayesian Price of anarchy \cite{roughgarden2017price} does not increase when agents can employ Sybil strategies under the no-overbidding condition. More formally, let NESP be the set of subjective Bayes Nash equilibrium holding the no-overbidding condition, the Bayesian price of anarchy of the Sybil extension mechanism is
\begin{equation*}
    \text{BPoA}^{\textbf{Sy}} = \sup_{v,\text{$\sigma$ is NESP}}\frac{\text{Opt}(v)}{\mathbb E_{b\sim \sigma}[\mathcal W^{\textbf{Sy}(\mathcal M)}(b,v)]}
\end{equation*}
where $\text{OPT}(v)$ is the optimal welfare with the valuation profile $v$. Therefore,
\begin{equation}
    \text{BPoA}^{\textbf{Sy}} =  \sup_{v,\text{$\sigma$ is NESP}} \frac{W^{\mathcal M}(v,v)}{\mathbb E_{b\sim \sigma}[\mathcal W^{\textbf{Sy}(\mathcal M)}(b,v)]}\frac{\text{Opt}(v)}{W^{\mathcal M}(v,v)}\leq \sup_v \frac{\text{OPT}(v)}{\mathcal W^{\mathcal M}(v,v)} = \text{PoA}
\end{equation}
deducing the previous claim.

Now we will focus on proving that the Shapley value mechanism with symmetric submodular cost functions is Sybil welfare invariant. To do so, we will break the result in different propositions.
\begin{proposition}\label{prop:shapleyadd} The Shapley value mechanism over symmetric submodular cost functions $C$ holds:
\begin{enumerate}
    \item Strong-monotonicity. In particular, if a player $i$ decides to make a Sybil strategy and commit extra bids, then all the other players' utility will not decrease.
    \item At most $\mathcal H_n-$approximate. 
\end{enumerate}  
\end{proposition}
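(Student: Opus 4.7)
The plan is to prove the two parts separately, leveraging the classical description of the Shapley value mechanism for symmetric submodular costs via the Moulin--Shenker elimination algorithm with cross-monotone cost shares. With $C=f\circ |\cdot|$ for $f$ non-decreasing concave and $f(0)=0$, the Shapley cost shares are $\zeta_i(S)=f(|S|)/|S|$ for $i\in S$, and since $k\mapsto f(k)/k$ is non-increasing, these are cross-monotone: $\zeta_i(S)\geq\zeta_i(T)$ whenever $i\in S\subseteq T$. The mechanism coincides with the Moulin--Shenker process that starts from $T_0=[n]$ and iteratively sets $T_{t+1}=\{i\in T_t:b_i\geq\zeta_i(T_t)\}$, converging to the allocation $S(\textbf{b})$.

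For Part 1, I would prove strong monotonicity by induction on $t$: if $\textbf{b}'\geq\textbf{b}$, then $T_t(\textbf{b})\subseteq T_t(\textbf{b}')$. Indeed, whenever $i\in T_{t+1}(\textbf{b})$ one has $b'_i\geq b_i\geq \zeta_i(T_t(\textbf{b}))\geq\zeta_i(T_t(\textbf{b}'))$ by the inductive hypothesis and cross-monotonicity, so $i\in T_{t+1}(\textbf{b}')$. Passing to the limit gives $S(\textbf{b})\subseteq S(\textbf{b}')$. The ``in particular'' consequence then follows from the consistency axiom: committing extra Sybil bids enlarges the bid vector by appending nonnegative entries, producing a vector that dominates the zero-padded original. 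By strong monotonicity the set of original winners cannot shrink, and since $|S|$ weakly grows while $f(k)/k$ is non-increasing, each surviving winner's Shapley payment weakly decreases, so their utility weakly increases.

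For Part 2, I would apply the $\beta$-summability framework of Roughgarden--Sundararajan. The cost shares are $\mathcal H_n$-summable because, for any ordering $i_1,\ldots,i_k$ of a set $T$ with $|T|=k\leq n$,
\[
\sum_{j=1}^{k}\zeta_{i_j}(\{i_1,\ldots,i_j\})\;=\;\sum_{j=1}^{k}\frac{f(j)}{j}\;\leq\; f(k)\sum_{j=1}^{k}\frac{1}{j}\;\leq\;\mathcal H_n\,f(|T|),
\]
where monotonicity of $f$ gives $f(j)\leq f(k)$. Combined with cross-monotonicity, the standard analysis yields $\pi(S)\leq\mathcal H_n\,\pi(S^{\star})$: the gap $\pi(S)-\pi(S^{\star})$ is controlled by charging the valuation of each $j\in S^{\star}\setminus S$ to the cost share of the Moulin--Shenker round at which $j$ was eliminated, and then summing via the displayed inequality applied to $S^{\star}$.

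The main obstacle is Part 2: the summability estimate itself is a one-line calculation, but the charging argument that converts it into a social-cost bound needs a careful case split over players in $S^{\star}\setminus S$, $S\setminus S^{\star}$, and $S\cap S^{\star}$, together with the observation that any $j\in S^{\star}\setminus S$ was removed at a stage $T_j\supseteq S$, so $v_j<\zeta_j(T_j)\leq f(|S|+1)/(|S|+1)$. Part 1, by contrast, is essentially bookkeeping once cross-monotonicity and the Moulin--Shenker description are in hand.
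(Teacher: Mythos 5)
Your Part 1 is essentially the paper's argument in different clothing: the paper works directly from the closed-form description of the allocation as the largest set $X$ with $b_i\geq C(X)/|X|$ for all $i\in X$, while you re-derive the same inclusion $S(\textbf{b})\subseteq S(\textbf{b}')$ by induction over the Moulin--Shenker rounds using cross-monotonicity of $\zeta_i(S)=f(|S|)/|S|$; both are correct and the ``in particular'' consequence is handled identically. Part 2 is where you genuinely diverge. The paper proves the $\mathcal H_n$ bound indirectly, by showing that the Shapley allocation $S_1$ contains the allocation $S_2$ of the Hybrid mechanism of Dobzinski et al.\ (which starts the elimination from the welfare-maximizing set $S^\star$), and then inherits the known $\mathcal H_n$ guarantee of that mechanism; your route via $\mathcal H_n$-summability of the Shapley shares and a direct charging argument is self-contained and does not need the auxiliary mechanism, which is a real advantage. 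One caution: the off-the-shelf Roughgarden--Sundararajan theorem for an $\alpha$-budget-balanced, $\beta$-summable Moulin mechanism yields $(\alpha+\beta)=1+\mathcal H_n$, not $\mathcal H_n$; to close that gap you must use the sharper accounting you allude to, namely that the $j$-th rejected player (in reverse order of elimination) was dropped from a set of size at least $|S|+j$, so the rejected valuations sum to at most $f(n)\bigl(\mathcal H_n-\mathcal H_{|S|}\bigr)$ and the saved $\mathcal H_{|S|}\geq 1$ absorbs the cost term $C(S)$ when $S\neq\emptyset$ (with the $S=\emptyset$ and $S^\star=\emptyset$ cases treated separately). Since you explicitly flag this case analysis as the delicate step, the plan is sound; just be aware that invoking the summability theorem as a black box would lose the constant.
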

The proof is mechanical in nature and, for the sake of brevity, is relegated to the appendix for interested readers. We know by the Proposition \ref{prop:cost} that the Shapley value mechanism is, in general, not Sybil-proof. To prove that the Shapley value mechanism with symmetric submodular cost functions is a Sybil welfare mechanism, we will study the optimal Sybil-strategies of the Shapley-value mechanism with more detail. We will compute subweak dominant strategy sets and weak dominant strategy set of the game induced by a given valuation $v_i$ and the  Sybil Shapley value mechanism. We will use it to prove that this mechanism is Sybil welfare invariant.
We will assume that the cost function $C:2^{\mathbb N}\rightarrow \mathbb R_+$ is monotone non-decreasing, symmetric and submodular, and so, there exists $f:\mathbb R_+\rightarrow\mathbb R_+$ such that $f(0)=0$, is monotone non-decreasing, and concave such that $C(S) = f(|S|)$.

In the Sybil-extension mechanism, the space of actions of a player is 
\begin{equation}
    \mathcal A = \{(b_1,...,b_k,...)\mid \forall i\geq1, b_i\geq0,\text{ and } b_i=0\text{ for all but finitely many }i\}.   
\end{equation}
Now, if the private valuation of a player is $v$, we consider the following subset of actions
\begin{equation}
\begin{aligned}
B(v) =\bigcup_{\sigma\in S_\infty}\sigma\cdot\left\{(x_1,...,x_k,...)\in\mathcal A: x_1=v, v/l\geq x_l\text{ for }l\geq2\right\}.
\end{aligned}
\end{equation}
In the following lemma, we will prove that $B(v)$ is a subweak dominant strategy set, and so, rational agents will prefer to choose strategies from $B(v)$.
\begin{proposition} \label{lemma:prev} If an agent has valuation $v$, then the following holds:
\begin{enumerate}
    \item The set of strategies $B(v)$ is a subweak dominant strategy set of $\mathcal A$. That is, for every action $\textbf{b}_i\in \mathcal A\setminus B(v)$, there is a $\textbf{z}_i\in B(v)$ such that
\begin{equation}\label{eq:lemma}
    u_i(\textbf{z},\bb_{-i})\geq u_i(\bb_i,\bb_{-i})\text{ for every tuple of actions } \bb_{-i}. 
\end{equation}
    \item There exists a unique weak dominant strategy set $\overline{B(v)}$. Moreover, for every element $\textbf{b}\in\overline{B(v)}$ there is an element in $\textbf{z}\in B(v)$ such that $u_i(\textbf{z}_i,\bb_{-i})= u_i(\bb_i,\bb_{-i})$  for every tuple of actions  $\bb_{-i}$. 
    \item For any vector of reports $\textbf{b}\in\prod_{i=1}^n \overline{B(v_i)}$, and $\textbf{z}\in\prod_{i=1}^n \overline{B(v_i)}$ such that $\textbf{z}_i=_{u_i}\textbf{b}_i$ for $i=1,...,n$, the social cost of both reports are the same, i.e $\pi(v,\textbf{b}) =\pi(v,\textbf{z})$.
\end{enumerate}
\end{proposition}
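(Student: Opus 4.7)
The plan is to prove the three items in sequence, leaning on the strong monotonicity and $\mathcal H_n$-approximation of the Shapley mechanism established in Proposition \ref{prop:shapleyadd}, together with the fact that the per-identity payment $f(k)/k$ is non-increasing in $k$ (by concavity of $f$ with $f(0)=0$).

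For Part 1, given any action $\mathbf{b}_i \in \mathcal A \setminus B(v)$, I would construct a candidate $\mathbf{z}_i \in B(v)$ by sorting the entries of $\mathbf{b}_i$ in decreasing order $y_1 \geq y_2 \geq \cdots$ and taking $\mathbf{z}_i$, up to permutation, to be $(v, \min\{y_2,v/2\}, \min\{y_3,v/3\}, \ldots)$. To verify the weak dominance inequality, I would fix an arbitrary $\mathbf{b}_{-i}$ and track how the Shapley cutoff $k^\star$ and the number $m$ of the player's own Sybils in the top $k^\star$ change as we pass from $\mathbf{b}_i$ to $\mathbf{z}_i$. The crux is that when $m$ Sybils are allocated the player's net utility equals $v - m\cdot f(k^\star)/k^\star$, and capping the $l$-th sorted Sybil at $v/l$ ensures this quantity stays non-negative whenever the $l$-th Sybil is allocated (since inclusion forces $v/l \geq f(k^\star)/k^\star$), while promoting $y_1$ to $v$ only weakly improves utility by strong monotonicity and individual rationality. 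A careful case split on whether the cutoff rises, falls, or stays the same after the substitution completes the argument.

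For Part 2, I would apply Lemma \ref{lemma:strict} with the chain $C_j = \{(x_1,x_2,\ldots)\in\mathcal A : 0\leq x_l\leq j\text{ for all } l,\text{ and } x_l=0\text{ for } l>j\}$. Each $C_j$ is homeomorphic to the compact box $[0,j]^j \subset \mathbb R^j$, and $C_j\cap B(v)$ is cut out by the closed conditions ``some coordinate equals $v$'' and ``the $l$-th order statistic is at most $v/l$'', so $C_j\cap B(v)$ is sequentially compact. Since every action has finite support, $\bigcup_j C_j=\mathcal A$, satisfying the hypotheses of Lemma \ref{lemma:strict}, which yields the existence and uniqueness of $\overline{B(v)}$. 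The ``moreover'' assertion --- that each $\mathbf{b}\in\overline{B(v)}$ is utility-equivalent to some $\mathbf{z}\in B(v)$ --- follows from the construction inside the lemma's proof: weakly dominant elements are obtained as undominated limits of elements of $B(v)$ on each compact piece $C_j\cap B(v)$, and the Shapley utility is upper semi-continuous on each such piece, yielding a utility-equivalent representative in $B(v)$.

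For Part 3, I would argue via a hybrid swap. The social cost $\pi(v,\mathbf{b}) = \sum_{i\notin S'}v_i + C(S')$ depends only on the set $S'$ of players with at least one allocated Sybil, so it suffices to show $S'(\mathbf{b}) = S'(\mathbf{z})$. Starting from $\mathbf{b}$, I would swap one coordinate at a time from $\mathbf{b}_j$ to $\mathbf{z}_j$: for the swapped player, utility equivalence forces both the membership of $j$ in $S'$ and the total payment $m\cdot f(k^\star)/k^\star$ to coincide under $\mathbf{b}_j$ and $\mathbf{z}_j$, so allocation status is preserved; for the other players, I would invoke strong monotonicity of the Shapley mechanism from Proposition \ref{prop:shapleyadd} to deduce that their inclusion or exclusion is likewise unchanged. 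I expect the main obstacle to be this last step --- ruling out that a utility-equivalent perturbation in one player's Sybil profile reshuffles which other players sit above the cutoff --- and it is where the structural constraint $y_l\leq v/l$ defining $B(v)$, together with the sorted structure of the Shapley allocation, should play the decisive role.
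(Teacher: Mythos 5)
Your Parts 1 and 2 essentially reproduce the paper's strategy: the canonical representative $\mathbf{z}_i=(v,\min\{y_2,v/2\},\dots)$ is the paper's construction except that the paper additionally caps each coordinate at $f(1)/l$, i.e.\ takes $z_l=\min\{y_l,v/l,f(1)/l\}$; that extra cap is what lets the paper run its Case~4 comparison uniformly when $v\geq f(1)$, and it is also used later in Part~3, so you should carry it along. Your chain $C_j$ for Lemma~\ref{lemma:strict} is a harmless variant of the paper's $C_j=\mathbb R_+^j\times\{0\}^{\mathbb N}$. For the ``moreover'' clause of Part~2 you do not need any limit argument: apply the Part~1 construction to $\mathbf{b}\in\overline{B(v)}$ to get $\mathbf{z}\in B(v)$ with $\mathbf{z}\geq_{u_i}\mathbf{b}$, and since elements of the weak dominant set are undominated this forces $\mathbf{z}=_{u_i}\mathbf{b}$. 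Also be aware that the entire content of Part~1 lives in the ``careful case split'' you defer; the paper needs four cases, including the observation that when the cap bites the payment under $\mathbf{b}_i$ exceeds $\min\{v,f(1)\}$ while under $\mathbf{z}_i$ it never does.

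The genuine gap is in Part~3, exactly at the step you flag. Knowing that $\mathbf{b}_j=_{u_j}\mathbf{z}_j$ at the current hybrid profile tells you player $j$'s utility is preserved, but it does not by itself pin down the allocated \emph{set}: the same utility is compatible a priori with a different cutoff $k^\star$ (e.g.\ $x_j=1$ with $p_j=v_j$ versus $x_j=0$, or the same count of allocated Sybils at a different per-identity price when $f(k)/k$ is not strictly decreasing), and a changed cutoff reshuffles which other players are served, which changes $\pi$. The paper closes this hole with a structural argument you are missing: it proves that $\mathbf{b}_j=_{u_j}\mathbf{z}_j$ forces every pair of corresponding coordinates to lie in the same price band $[f(n_l)/n_l,\,f(n_l-1)/(n_l-1))$, by exhibiting, for any coordinate violating this, an adversarial profile $\mathbf{b}_{-j}$ consisting of $n-l-1$ bids equal to $f(n-1)/(n-1)$ under which the two reports yield strictly different utilities (here the caps $v/l$ and $f(1)/l$ in the definition of the representative are essential). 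Since the Shapley allocation depends only on which band each bid occupies, the allocation, and hence the social cost, coincides for $\mathbf{b}$ and $\mathbf{z}$. Without an argument of this type, your hybrid swap does not go through.
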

The proof follows directly from the previously established lemmas, employing straightforward, mechanical arguments. For clarity and conciseness, it is provided in the appendix.
Now, let's argue why rational agents will choose strategies on the set $\overline{B(v)}$.
Since the Sybil Shapley value mechanism is not truthful, the agents can maximize its utility by being strategic. Suppose now the valuation of the player is $v$ and that the bids $\bb_{-i}$ are drawn from a distribution $\mathcal D$ over $\mathcal A$. Therefore, a rational agent maximizes its utility and, therefore wants to solve the optimization problem
\begin{equation*}
    \underset{x\in\mathcal A}{\text{argmax}}\quad\mathbb E_{\bb_{-i}\sim\mathcal D}[u_i(x,\bb_{-i})].
\end{equation*}
Now, let $x^\star$ be an element that maximizes the expected utility, then we know that there exists $z\in B(v)$ such that $u_i(z,\bb_{-i})\geq u_i(x^\star,\bb_{-i})$ for all $\bb_{-i}\in\mathcal A$, and the inequality is strict for some element in $\bb_{-i}\in\mathcal A$. In particular, we deduce that $\mathbb E_{\bb_{-i}\sim\mathcal D}[u_i(z,\bb_{-i})]\geq \mathbb E_{\bb_{-i}\sim\mathcal D}[u_i(x^\star,\bb_{-i})]$, and so 
\begin{equation*}
    \underset{x\in\mathcal A}{\text{argmax}}\quad\mathbb E_{\bb_{-i}\sim\mathcal D}[u_i(x,\bb_{-i})]\cap  \underset{x\in\overline{B(v)}}{\text{argmax}}\quad\mathbb E_{\bb_{-i}\sim\mathcal D}[u_i(x,\bb_{-i})]\not=\emptyset.
\end{equation*}
Moreover, if the distribution $\mathcal D$ has full support 
(all open sets under the final topology have non-zero probability), it holds
\begin{equation}
 \underset{x\in\mathcal A}{\text{argmax}}\quad\mathbb E_{\bb_{-i}\sim\mathcal D}[u_i(x,\bb_{-i})]=  \underset{x\in\overline{B(v)}}{\text{argmax}}\quad\mathbb E_{\bb_{-i}\sim\mathcal D}[u_i(x,\bb_{-i})]
\end{equation}
by using that $\overline{B(v)}$ is a weak dominant strategy set.
So, we will assume that, in equilibrium, agents will choose strategies from $\overline{B(v)}$. Now, since the Shapley value mechanism is no longer truthful in the Sybil extension, the agents will choose actions or strategies that maximize their expected utility over their private beliefs. However, as we will see, this will not have an impact on the worst-case social cost of the Shapley value mechanism.

\begin{theorem} The Shapley value mechanism for public excludable goods with symmetric submodular monotone cost function is Sybil welfare invariant.
\end{theorem}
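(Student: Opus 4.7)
The plan is to show that for every $\bb$ in the weak-dominant strategy set $\prod_i \overline{B(v_i)}$ of Proposition \ref{lemma:prev}, the Sybil-extension social cost satisfies $\pi^{\textbf{Sy}(\mathcal M)}(v,\bb) \leq \pi(v,v)$. Let $k$ be the number of identities served by the Sybil Shapley rule on $\bb$, $S' \subseteq [n]$ the set of real agents with at least one served identity, and $l_i$ the number of agent $i$'s bids in the top $k$. The first step exploits the structural constraints defining $B(v_i)$—the largest coordinate of $\bb_i$ equals $v_i$, and its $l$-th largest coordinate is at most $v_i/l$—to derive: (i) the main bid $v_i$ is served whenever $l_i \geq 1$ (since it dominates every Sybil of agent $i$), so $i \in S'$ if and only if $v_i \geq f(k)/k$ and $S'$ is exactly the top $|S'|$ real agents by valuation; (ii) $v_i \geq l_i f(k)/k$ for each $i \in S'$, by matching the $l_i$-th largest bid of agent $i$ against the threshold; (iii) summing, $\sum_{i \in S'} v_i \geq f(k) \geq f(|S'|)$, using $\sum_i l_i = k \geq |S'|$ and monotonicity of $f$.

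Next, I invoke the strong monotonicity of the Shapley mechanism (Proposition \ref{prop:shapleyadd}) to obtain the containment $S^* \subseteq S'$ and the bound $k^* \leq k$: the Sybil-augmented bid vector coordinate-wise dominates the zero-padded truthful one, so no real agent served in the original mechanism loses allocation, and the Shapley cutoff is non-decreasing in the bid vector. The heart of the argument is then the surplus inequality
\[
\sum_{i \in S'} v_i - f(k) \;\geq\; \sum_{i \in S^*} v_i - f(k^*),
\]
which I would establish by switching the agents from their truthful bids to $\bb_i$ one at a time. Proposition \ref{prop:shapleyadd} keeps the other agents' utilities non-decreasing at each swap, and the weak dominance of $\bb_i$ inside $\overline{B(v_i)}$—combined with the rational-play / no-overbidding discussion following Proposition \ref{lemma:prev}—is used to ensure the switching agent's own utility also does not decrease, since the truthful strategy, lying itself in $B(v_i)$, cannot weakly dominate any $\bb_i \in \overline{B(v_i)}$. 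I expect this surplus step to be the principal obstacle, as weak dominance alone only forbids a strictly better alternative; one must further appeal to Proposition \ref{lemma:prev}(3), which says utility-equivalent reports inside $\overline{B(v_i)}$ yield identical social costs, to rule out pathological instances (simple examples show that for $\bb \in \prod_i B(v_i) \setminus \prod_i \overline{B(v_i)}$ the inequality can fail, so the restriction to the weak-dominant set is essential).

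Granting the surplus inequality, the conclusion follows by rearrangement. Subtracting yields $\sum_{i \in S' \setminus S^*} v_i \geq f(k) - f(k^*) \geq f(|S'|) - f(k^*)$, where the second inequality uses $k \geq |S'|$ and monotonicity of $f$. Substituting into the identity
\[
\pi(v,v) - \pi^{\textbf{Sy}(\mathcal M)}(v,\bb) \;=\; \sum_{i \in S' \setminus S^*} v_i - \bigl(f(|S'|) - f(k^*)\bigr),
\]
which follows directly from $S^* \subseteq S'$ and the definitions of the two social costs, gives the non-negativity asserted by Sybil welfare invariance.
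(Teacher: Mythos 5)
Your skeleton follows the paper's proof quite closely: reduce to the structural constraints defining $B(v_i)$, use strong monotonicity (Proposition \ref{prop:shapleyadd}) to get $S^* \subseteq S'$ and $k \geq k^*$, and finish by showing $\sum_{i \in S'\setminus S^*} v_i \geq f(|S'|) - f(|S^*|)$ and rearranging. Steps (i)--(iii) and the final rearrangement are sound and essentially identical to the paper's. Where you diverge is in how the key inequality is closed: the paper argues a per-agent lower bound on the true valuation of every served real agent and then applies concavity of $f$ in the form $(|S'|-|S^*|)f(|S'|)/|S'| \geq f(|S'|)-f(|S^*|)$, whereas you route through the aggregate-surplus inequality $\sum_{i\in S'} v_i - f(k) \geq \sum_{i\in S^*} v_i - f(k^*)$ and propose to prove it by switching agents from truthful to $\bb_i$ one at a time.

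That switching step is the gap. The fact that the truthful report, lying in $B(v_i)$, does not weakly dominate $\bb_i \in \overline{B(v_i)}$ only says that either $\bb_i$ is utility-equivalent to truthful against \emph{every} opponent profile, or there exists \emph{some} opponent profile at which $\bb_i$ does strictly better. It says nothing about the particular opponent profile arising at that step of your switching sequence; at that profile truthful may give agent $i$ strictly higher utility (this is exactly what an undominated but risky Sybil strategy looks like: it wins against some beliefs and loses against others). So the switcher's own utility can drop, the telescoping fails, and the surplus inequality is not established. Moreover the surplus inequality is genuinely false if one assumes only $\bb \in \prod_i B(v_i)$: take $f(j)=\min(j,3)$, $v=(100,\,0.9,\,0.9)$, $\bb_1=(100,50)$, $\bb_2=\bb_3=(0.9)$; then $k^*=1$, $S^*=\{1\}$, $k=4$, $S'=\{1,2,3\}$, the surplus drops from $99$ to $98.8$, and indeed $\pi^{\textbf{Sy}(\mathcal M)}(v,\bb)=3>2.8=\pi(v,v)$. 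You correctly observe that the restriction to $\overline{B(v_i)}$ is therefore essential, but your proposal never extracts a usable pointwise property from that restriction: Proposition \ref{lemma:prev}(3) only equates the social costs of utility-equivalent reports inside $\overline{B(v_i)}$, it does not compare a report against truthful at a given profile. The missing ingredient is a quantitative consequence of undominatedness (in the paper's argument, effectively a lower bound of the form $v_i \geq f(|S'|)/|S'|$ for each served real agent), and the weak-dominance bookkeeping you propose does not supply it.
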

\begin{proof} First, recall that since the cost function is symmetric and submodular, there exists a monotone non-decreasing concave function $f$ such that $C(S) = f(|S|)$ for every $S\subseteq \mathbb N$. Since $f$ is concave, we have that $f(m)/m\leq f(n)/n$ for all $n\leq m$. Also, we deduce that
$(m-n)f(m)/m\geq f(m)-f(n)$ for all $n\leq m$.
Now, suppose that there are $n$ players with private valuations $v_1,...,v_n$ and private beliefs with full support $\mathcal D_1,...,\mathcal D_n$. By Lemma \ref{lemma:strict}, the agents take actions in $\prod_{i=1}^n\overline{B(v_i)}$. 
Now, let $x=(x_1,...,x_n)$ be the actions taken by the players. First, we will see that wlog we can assume that $(x_1,...,x_n)\in\prod_{i=1}^n B(v_i)$ by the point 3 of Proposition \ref{lemma:prev}.
Let $S(v)$ (resp. $S(x)$) be the set of players that have some identity having access to the public good when reporting $v$ (resp. $x$). By definition of $B(v)$, the first element is $v$, therefore $S(v)\subseteq S(x)$, and so $\sum_{i\in S(x)} v_i\geq \sum_{i\in S(v)}v_i$. Since the mechanism is budget-balanced, the sum of payments is $C(S)$ with winners set $S$. We will prove that $\pi(S(x))\leq \pi(S(v))$ by cases.

\textbf{Case 1} $S(x)=\emptyset$. We know that $S(v)\subseteq S(x)$ and so $S(v)=\emptyset$, in both cases we have the null allocation and so the social cost of both cases coincide. Therefore $\pi(S(v))= \pi(S(x))\leq \mathcal H_n\pi(S^\star)$.

\textbf{Case 2} $S(x)\not=\emptyset$.  For every $i\in S(x)$, let $k_i$ be the number of $i$ Sybil identities $k_i$ such that the public good is allocated. Since $x\in B(v_i)$, $v_i/j \geq x^{j}_i$ for $j=1,...,k_i$, in particular $v_i\geq f(|S(x)|)/|S(x)|$.  Therefore, 
\begin{equation*}
    \sum_{i\in S(x)\setminus S(v)}v_i\geq(|S(x)|-|S(v)|)f(|S(x)|)/|S(x)|\geq f(|S(x)|)-f(|S(v)|)
\end{equation*}
where the last inequality is deduced from $f$ being concave and $|S(v)|\leq |S(x)|$. So,
\begin{align*}
    \pi(S(x)) &= f(|S(x)|) + \sum_{i\not\in S(x)}v_i=  f(|S(x)|) + \sum_{i\not\in S(v)}v_i - \sum_{i\in S(x)\setminus S(v)}v_i\\
              &\leq  f(|S(x)|) +  \sum_{i\not\in  S(v)}v_i - f(|S(x)|)+f(|S(v)|) \\
              &= \pi(S(v)).
\end{align*}
\end{proof}
As a corollary, it can be shown that if agents do not overbid, the Bayesian price of anarchy of the Sybil extension of the Shapley value mechanism is $\mathcal H_n$. In summary, we have proved that even when the number of agents is unknown to both the mechanism designer, and the agents participating in the mechanism, the Shapley value mechanism for submodular monotone cost functions has the same worst-case social cost as the same mechanism with known number of agents. Therefore, we have shown the robustness of Shapley value mechanism over permissionless environments such as Peer-to-Peer (P2P) Networks and decentralized finance (DeFi) platforms, and in particular can be practical and robust when members of a decentralized autonomous organizations (DAOs) want to deploy a public excludable good. Since the Shapley value mechanism is no longer truthful, to implement this mechanism in a public blockchain will be necessary to make some small adjustments. To maintain the same worst-case equilibria under false-name strategies the mechanism will have to shield the bids, similar to \cite{ferreira2020credible}, by using different cryptographic tools such as commit-and-reveal or multiparty computation protocols. For example, the mechanism could have two phases. The commit phase and the reveal phase. In the commit-phase, agents send a commitment of a bid and some cash as a collateral (for example the cost of the public good). After the finish of the first phase, the agents reveal their bid, and the allocation and the payment is computed following the rules of the Shapley value mechanism. If an agent does not reveal its bid in time, they lose their collateral, making weak dominant strategy to reveal their bid. Without this two-phase mechanism or a trusted third party that keeps the bids private, some agents would have access to other agents' bids, changing the structure of the mechanism from one-shoot mechanisms to sequential mechanisms with multiple rounds. 
\section{Conclusion}\label{section:conclusions}
In this paper, we have formalized false-name strategies in cost-sharing mechanisms. We established an impossibility result, indicating that many mechanisms from existing literature are vulnerable to these strategies. Furthermore, we characterized the worst-case welfare for mechanisms that satisfy the properties of individual rationality, symmetry, truthfulness, upper semicontinuous and Sybil-proofness. These mechanisms have a worst-case welfare of at least $\Omega(n)$, and we demonstrated that this bound is asymptotically tight. Additionally, we introduced the concept of the Sybil welfare invariant property and showed that the Shapley value mechanism possesses this property. This means that regardless of the priors held by agents, the Shapley value mechanism with sybils achieves the same worst-case welfare as the Shapley value mechanism without sybils. 
As a direction for future research, we aim to explore the vulnerabilities of combinatorial cost-sharing mechanisms to false-name strategies and determine whether mechanisms cited in the literature, such as \cite{dobzinski2017combinatorial} and \cite{birmpas2022cost}, are Sybil welfare invariant. To do so, we will need to extend the definition of Sybil welfare invariant under combinatorial domains. Also, we leave as future work, to see if Theorem \ref{theorem:lower} holds for weaker conditions such as removing the deterministic condition. Finally, in the cost-sharing literature, we aim to study Sybil-proof and Sybil-welfare invariant mechanisms with general cost functions $C$.
Beyond the cost-sharing literature, we aim to study if mechanisms such as the one proposed in \cite{bahrani2023bidders} are Sybil welfare invariant and if not, make adjustments to the mechanism to maintain the worst-case welfare under private beliefs.

\section{Acknowledgments}
The author would like to express sincere gratitude to the Ethereum Foundation for generously funding the scholarship that made this research possible. Special thanks are also extended to Vanesa Daza, for her invaluable feedback and meticulous corrections. 

\printbibliography

@String{JACM = "J. ACM" }

@String{Computer = "{IEEE} Computer" }

@String{Academic = "Academic Press" }

@String{Springer = "Springer-Verlag" }

@book{krishna2009auction,
  title={Auction theory},
  author={Krishna, Vijay},
  year={2009},
  publisher={Academic press}
}

@article{yokoo2004effect,
  title={The effect of false-name bids in combinatorial auctions: New fraud in Internet auctions},
  author={Yokoo, Makoto and Sakurai, Yuko and Matsubara, Shigeo},
  journal={Games and Economic Behavior},
  volume={46},
  number={1},
  pages={174--188},
  year={2004},
  publisher={Elsevier}
}

@inproceedings{douceur2002sybil,
  title={The sybil attack},
  author={Douceur, John R},
  booktitle={International workshop on peer-to-peer systems},
  pages={251--260},
  year={2002},
  organization={Springer}
}

@inproceedings{dinger2006defending,
  title={Defending the sybil attack in p2p networks: Taxonomy, challenges, and a proposal for self-registration},
  author={Dinger, Jochen and Hartenstein, Hannes},
  booktitle={First International Conference on Availability, Reliability and Security (ARES'06)},
  pages={8--pp},
  year={2006},
  organization={IEEE}
}

@inproceedings{yu2006sybilguard,
  title={Sybilguard: defending against sybil attacks via social networks},
  author={Yu, Haifeng and Kaminsky, Michael and Gibbons, Phillip B and Flaxman, Abraham},
  booktitle={Proceedings of the 2006 conference on Applications, technologies, architectures, and protocols for computer communications},
  pages={267--278},
  year={2006}
}

@inproceedings{yu2008sybillimit,
  title={Sybillimit: A near-optimal social network defense against sybil attacks},
  author={Yu, Haifeng and Gibbons, Phillip B and Kaminsky, Michael and Xiao, Feng},
  booktitle={2008 IEEE Symposium on Security and Privacy (sp 2008)},
  pages={3--17},
  year={2008},
  organization={IEEE}
}

@inproceedings{so2011defending,
  title={Defending against sybil nodes in bittorrent},
  author={So, Jung Ki and Reeves, Douglas S},
  booktitle={International Conference on Research in Networking},
  pages={25--39},
  year={2011},
  organization={Springer}
}

@article{zhang2019double,
  title={Double-spending with a sybil attack in the bitcoin decentralized network},
  author={Zhang, Shijie and Lee, Jong-Hyouk},
  journal={IEEE transactions on Industrial Informatics},
  volume={15},
  number={10},
  pages={5715--5722},
  year={2019},
  publisher={IEEE}
}

@inproceedings{mazorra2022price,
  title={Price of MEV: Towards a Game Theoretical Approach to MEV},
  author={Mazorra, Bruno and Reynolds, Michael and Daza, Vanesa},
  booktitle={Proceedings of the 2022 ACM CCS Workshop on Decentralized Finance and Security},
  pages={15--22},
  year={2022}
}

@inproceedings{todo2011false,
  title={False-name-proof mechanism design without money},
  author={Todo, Taiki and Iwasaki, Atsushi and Yokoo, Makoto},
  booktitle={The 10th International Conference on Autonomous Agents and Multiagent Systems-Volume 2},
  pages={651--658},
  year={2011}
}

@inproceedings{wagman2008optimal,
  title={Optimal False-Name-Proof Voting Rules with Costly Voting.},
  author={Wagman, Liad and Conitzer, Vincent},
  booktitle={AAAI},
  volume={8},
  pages={190--195},
  year={2008}
}

@inproceedings{iwasaki2010worst,
  title={Worst-case efficiency ratio in false-name-proof combinatorial auction mechanisms},
  author={Iwasaki, Atsushi and Conitzer, Vincent and Omori, Yoshifusa and Sakurai, Yuko and Todo, Taiki and Guo, Mingyu and Yokoo, Makoto},
  booktitle={Proceedings of the 9th International Conference on Autonomous Agents and Multiagent Systems: volume 1-Volume 1},
  pages={633--640},
  year={2010}
}

@article{fioravanti2022false,
  title={False-name-proof and strategy-proof voting rules under separable preferences},
  author={Fioravanti, Federico and Mass{\'o}, Jordi},
  journal={Available at SSRN 4175113},
  year={2022}
}

@article{yokoo2001robust,
  title={Robust combinatorial auction protocol against false-name bids},
  author={Yokoo, Makoto and Sakurai, Yuko and Matsubara, Shigeo},
  journal={Artificial Intelligence},
  volume={130},
  number={2},
  pages={167--181},
  year={2001},
  publisher={Elsevier}
}

@article{sher2012optimal,
  title={Optimal shill bidding in the VCG mechanism},
  author={Sher, Itai},
  journal={Economic Theory},
  volume={50},
  pages={341--387},
  year={2012},
  publisher={Springer}
}

@inproceedings{dobzinski2008shapley,
  title={Is Shapley cost sharing optimal?},
  author={Dobzinski, Shahar and Mehta, Aranyak and Roughgarden, Tim and Sundararajan, Mukund},
  booktitle={Algorithmic Game Theory: First International Symposium, SAGT 2008, Paderborn, Germany, April 30-May 2, 2008. Proceedings 1},
  pages={327--336},
  year={2008},
  organization={Springer}
}

@article{mazorra2023cost,
  title={The Cost of Sybils, Credible Commitments, and False-Name Proof Mechanisms},
  author={Mazorra, Bruno and Della Penna, Nicol{\'a}s},
  journal={arXiv preprint arXiv:2301.12813},
  year={2023}
}

@article{moulin1999incremental,
  title={Incremental cost sharing: Characterization by coalition strategy-proofness},
  author={Moulin, Herv{\'e}},
  journal={Social Choice and Welfare},
  volume={16},
  pages={279--320},
  year={1999},
  publisher={Springer}
}

@article{myerson1981optimal,
  title={Optimal auction design},
  author={Myerson, Roger B},
  journal={Mathematics of operations research},
  volume={6},
  number={1},
  pages={58--73},
  year={1981},
  publisher={INFORMS}
}

@book{sundararajan2009trade,
  title={Trade-offs in cost sharing},
  author={Sundararajan, Mukund},
  year={2009},
  publisher={Stanford University}
}

@inproceedings{dobzinski2017combinatorial,
  title={Combinatorial cost sharing},
  author={Dobzinski, Shahar and Ovadia, Shahar},
  booktitle={Proceedings of the 2017 ACM Conference on Economics and Computation},
  pages={387--404},
  year={2017}
}

@article{harsanyi1968part,
  title={Part II: Bayesian equilibrium points},
  author={Harsanyi, Jhon C},
  journal={Management Science},
  volume={14},
  pages={320--334},
  year={1968}
}

@article{varian2009online,
  title={Online ad auctions},
  author={Varian, Hal R},
  journal={American Economic Review},
  volume={99},
  number={2},
  pages={430--434},
  year={2009},
  publisher={American Economic Association}
}

@article{birmpas2022cost,
  title={Cost sharing over combinatorial domains},
  author={Birmpas, Georgios and Markakis, Evangelos and Sch{\"a}fer, Guido},
  journal={ACM Transactions on Economics and Computation},
  volume={10},
  number={1},
  pages={1--26},
  year={2022},
  publisher={ACM New York, NY}
}

@article{maskin1985auction,
  title={Auction theory with private values},
  author={Maskin, Eric S and Riley, John G},
  journal={The American Economic Review},
  volume={75},
  number={2},
  pages={150--155},
  year={1985},
  publisher={JSTOR}
}

@article{moulin2001strategyproof,
  title={Strategyproof sharing of submodular costs: budget balance versus efficiency},
  author={Moulin, Herv{\'e} and Shenker, Scott},
  journal={Economic Theory},
  volume={18},
  pages={511--533},
  year={2001},
  publisher={Springer}
}

@article{bahrani2023bidders,
  title={When Bidders Are DAOs},
  author={Bahrani, Maryam and Garimidi, Pranav and Roughgarden, Tim},
  journal={arXiv preprint arXiv:2306.17099},
  year={2023}
}

@inproceedings{bleischwitz2008group,
  title={Group-strategyproof cost sharing for metric fault tolerant facility location},
  author={Bleischwitz, Yvonne and Schoppmann, Florian},
  booktitle={International Symposium on Algorithmic Game Theory},
  pages={350--361},
  year={2008},
  organization={Springer}
}

@inproceedings{brenner2007cost,
  title={Cost sharing methods for makespan and completion time scheduling},
  author={Brenner, Janina and Sch{\"a}fer, Guido},
  booktitle={Annual Symposium on Theoretical Aspects of Computer Science},
  pages={670--681},
  year={2007},
  organization={Springer}
}

@inproceedings{chawla2006optimal,
  title={Optimal cost-sharing mechanisms for steiner forest problems},
  author={Chawla, Shuchi and Roughgarden, Tim and Sundararajan, Mukund},
  booktitle={Internet and Network Economics: Second International Workshop, WINE 2006, Patras, Greece, December 15-17, 2006. Proceedings 2},
  pages={112--123},
  year={2006},
  organization={Springer}
}

@article{gupta2015efficient,
  title={Efficient cost-sharing mechanisms for prize-collecting problems},
  author={Gupta, Anupam and K{\"o}nemann, Jochen and Leonardi, Stefano and Ravi, Ramamoorthi and Sch{\"a}fer, Guido},
  journal={Mathematical Programming},
  volume={152},
  pages={147--188},
  year={2015},
  publisher={Springer}
}

@article{feigenbaum2003hardness,
  title={Hardness results for multicast cost sharing},
  author={Feigenbaum, Joan and Krishnamurthy, Arvind and Sami, Rahul and Shenker, Scott},
  journal={Theoretical Computer Science},
  volume={304},
  number={1-3},
  pages={215--236},
  year={2003},
  publisher={Elsevier}
}

@article{roughgarden2009quantifying,
  title={Quantifying inefficiency in cost-sharing mechanisms},
  author={Roughgarden, Tim and Sundararajan, Mukund},
  journal={Journal of the ACM (JACM)},
  volume={56},
  number={4},
  pages={1--33},
  year={2009},
  publisher={ACM New York, NY, USA}
}

@inproceedings{ferreira2020credible,
  title={Credible, truthful, and two-round (optimal) auctions via cryptographic commitments},
  author={Ferreira, Matheus VX and Weinberg, S Matthew},
  booktitle={Proceedings of the 21st ACM Conference on Economics and Computation},
  pages={683--712},
  year={2020}
}

@book{myerson1989mechanism,
  title={Mechanism design},
  author={Myerson, Roger B},
  year={1989},
  publisher={Springer}
}

@article{dufwenberg2002existence,
  title={Existence and uniqueness of maximal reductions under iterated strict dominance},
  author={Dufwenberg, Martin and Stegeman, Mark},
  journal={Econometrica},
  volume={70},
  number={5},
  pages={2007--2023},
  year={2002},
  publisher={Wiley Online Library}
}

@article{wiszniewska2016belief,
  title={Belief distorted Nash equilibria: introduction of a new kind of equilibrium in dynamic games with distorted information},
  author={Wiszniewska-Matyszkiel, Agnieszka},
  journal={Annals of Operations Research},
  volume={243},
  pages={147--177},
  year={2016},
  publisher={Springer}
}

@article{roughgarden2017price,
  title={The price of anarchy in auctions},
  author={Roughgarden, Tim and Syrgkanis, Vasilis and Tardos, Eva},
  journal={Journal of Artificial Intelligence Research},
  volume={59},
  pages={59--101},
  year={2017}
}
\appendix
\section{Appendix}

\subsection{Notation}
\vspace{0.5cm}
\begin{tabular}{c m{12cm} }
\hline
Symbol & Description \\
\hline
\( [n] \) & Set \( \{1, \ldots, n\} \). \\
$2^{[n]}$ & Set of subsets of $[n]$.\\
$\mathbb N$ & Set of natural numbers $\{1,2,3,4,...\}$.\\
\( S_\infty \) & Set of permutations of $\mathbb N$. \\
\( u_i \) & Player \( i \)'s utility function. \\
\( v_i \) & Player \( i \)'s private valuation. \\
\( \x \) & Allocation map. \\
\( \p \) & Payment map. \\
$x_i$ & The $i$-th component of the allocation map $\textbf{x}$.\\
$p_i$ & The $i$-th component of the payment map $\textbf{x}$.\\
$\mathcal{H}_n$ & $n$-th harmonic number. \\
$\mathbb{R}^\infty $ & Space of sequences of real numbers where only finitely many terms are non-zero. \\
$\mathbb{R}^\infty_+$ & Space of sequences of non-negative real numbers where only finitely many terms are non-zero. \\
\hline
\end{tabular}
\subsection{Proofs}

\textbf{Proof }\ref{lemma:strict} Let's first prove it assuming that the subweak dominant strategy set $B_i(t_i)$ is sequentially compact. Let's define the following relation. We say that $x\geq_{u_i}y$ if and only if $u_i(t_i,x,z)\geq u_i(t_i,y,z)$ for every $z\in A_{-i}$. Observe that the relation is reflexive and transitive, but is not symmetric. We say that $x=_{u_i} y$ if and only if $x\geq_{u_i}y$ and $y\geq_{u_i}x$ and $x>_{u_i} y$ if $x\geq_{u_i} y$ and $x\not=_{u_i} y$. For every, $x\in A_i$, we define $[x]_{\geq_{u_i}} = \{y\in A_i: x\geq_{u_i}y\}$. First, we consider the subset $\overline{B_i(t_i)}=\{x\in A_i:\nexists y\in B_i(t_i)\text{ s.t. }y>_{u_i}x_i\}$. 

We claim that $\overline{B_i(t_i)}$ is non-empty and is a weak dominant strategy set. First, lets prove that is non-empty. We will use Zorn's lemma and that $B_i(t_i)$ is sequentially compact. Consider a chain in $B_i(t_i)$, that is a totally order sequence $\{x_n\}_{n\in\mathbb N}$, by sequentially compactness of $B_i(t_i)$, there exists a subsequence $x_{n_j}$ that converges to an element $x^\star\in B_i(t_i)$. Using the upper-continuity of $u_i$ follows easily that $x^\star\geq_{u_i} x_n$ for all $n\in\mathbb N$. 
Therefore, every chain has a maximal element. Now, by Zorn's lemma, there exists at least one maximal element $x$ in $B_i(t_i)$. Now, suppose that $\overline{B_i(t_i)}$ is an empty set then, there is no maximal element in $A_i$, implying that for every $x\in A_i$ there is an element $y\in A_i$ such that $y>_{u_i}x$. Let $x^\star$ be a maximal element of $B_i(t_i)$. 
Then, there exists $y\in A_i$ such that $y>_{u_i}x^\star$. Since $B_i(t_i)$ is subweak dominant strategy set, there exists $y'$ such that $y'>_{u_i}x^\star$, but this contradicts the fact that $x^\star$ is a maximal element in $B_i(t_i)$, therefore $\overline{B_i(t_i)}$ is non-empty. Moreover, with this argument, we have seen that all maximal elements of $B_i(t_i)$ are elements of $\overline{B_i(t_i)}$. 

Now, we claim that $\overline{B_i(t_i)}$ is the unique strictly dominant strategy set. Let's prove it. Given an element $a_i\in A_i\setminus B_i(t_i)$ there is an element $a_i'\in B_i(t_i)$ such that $a_i'\geq_{u_i} a_i$. Now, since $\overline{B_i(t_i)}$ contain all maximal elements, we have that there exists an element $a_i''\geq_{u_i}a_i'$, deducing the first property of weak dominant strategy sets. The second condition clearly holds by definition.

The uniqueness follows similarly. If there are two different strictly dominant sets $M_1$ and $M_2$, then wlog there exists $x\in M_2\setminus M_1$. Since $M_1$ is a strictly dominant strategy set, there is $y\in M_1$ such that $y>_{u_i} x$. But this contradicts the fact that $M_1$ is a dominant strategy set.

Now let's suppose that there exists a chain $C_1\subseteq C_2\subseteq ...$ that holds the hypothesis of the lemma. For every $j\in\mathbb N$,  $B_i(t_i)\cap C_j$ is a dominant strategy set when restricting the game to $C_j$. And so by the previous argument, there exists $\overline{B}(t_i)^j_i$ strictly dominant strategy set of the normal form game where the agent $i$ has action space $C_i$. Now, the set $\cup_{j\in\mathbb N} \overline{B}_i(t_i)^j$ is a strictly dominant strategy set in $A_i$.

\textbf{Proof }\ref{prop:shapleyadd}
\begin{enumerate}
    \item The Shapley value is the cost-sharing mechanism with cost-sharing method $\zeta(S) = 1/|S|$. Now let $\bb$ and $\bb'$ be two bid vector profiles such that $\bb\geq \bb'$. 
    Let $S'$ be the allocation set of $\bb'$ and $S$ be the allocation set of $\bb$. Then, for every $i\in S'$, we have that $b'_i\geq C(S')/|S'|$. On the other hand, $b_i\geq b_i'\geq C(S')/|S'|$. Therefore, $S'\subseteq \text{argmax}_{X}\{|X|:b_i\geq C(X)/|X|\}=S$. The utility of other players will not decrease under more bids, since the allocation set will be at least $S$, and the payment will be at most $C(S)/|S|$.
    \item To do so, we will see that the allocation of the Shapley value mechanism $S_1$ with subadditive symmetric valuations contains the set of agents allocated $S_2$ allocated using the Hybrid mechanism provided in \cite{dobzinski2008shapley}. In this proposition, we use that the cost function is submodular, otherwise, the Shapley value mechanism would not be truthful in general \cite{moulin2001strategyproof}. The hybrid mechanism consists of the following:
    \begin{mybox2}{Hybrid mechanism}
        \begin{enumerate}
        \item Accept bids $b_1,...,b_n$.
        \item Take $S^\star\in\text{argmax}_S\{\sum_{i\in S} b_i-C(S)\}$.
        \item Initialize $S:=S^\star$.
        \item If $b_i\geq C(S^\star)/|S|$ for every $i\in S$, then halt with winners $S$.
        \item Let $i^\star\in S$ be a player with $b_{i^\star}<C(S^\star)/|S|$.
        \item Set $S\leftarrow S\setminus\{i\}$ and return to Step 4.
        \item Charge each winner $i\in S$ a payment equal to the minimum bid at which $i$ would continue to win (holding $b_{-i}$ fixed).
    \end{enumerate}
    \end{mybox2}
    First, wlog, we order the bids $b_1\geq...\geq b_n$. Observe that since the cost function is symmetric and subadditive, if $k\in S_i$ for $i=1,2$, then $[k]\subseteq S_i$ for $i=1,2$. Let $k_i$ be the largest element of $S_i$. So, proving that $S_2\subseteq S_1$ is equivalent to prove that $k_2\leq k_1$. Then, $b_{l}\geq C(S^\star)/|S_2|\geq C(S_2)/|S_2|$ for $l=1,...,k_2$, where $S^\star$ is the set that maximizes  $\sum_{i\in S}b_i - C(S)$. In particular, $k_2\in\{i:b_i\geq C(S)/|S|\}$ and so $k_2\leq \text{argmax}\{b_i\geq C([i])/i\}$. $\square$
\end{enumerate}
\textbf{Proof }\ref{lemma:prev} With out loss of generality, we will assume that $f(1)=1$.
\begin{enumerate}
    \item Given a vector $\bb_i = (b_1,...,b_k,0,...)$ (wlog we assume $b_1\geq b_2\geq...\geq b_k$), we consider the vector $\textbf{z} = (z_1,...,z_k,0,...)$ defined by
\begin{align*}
    z_1 &= v\\
    z_{l} &= \min\left\{ b_l,\frac{v}{l},\frac{1}{l}\right\},\text{ for }l=2,...,k.
\end{align*}
Clearly, $z_1\geq...\geq z_l$.  We will prove that $\textbf{z}$ holds the inequality \ref{eq:lemma} by cases will depend on the number of Sybils that both bid vector profiles will have access to the public good. Let $j(\bb_i)$ (resp. $j(\textbf{z})$) be the total number of Sybil identities of agent $i$ that are allocated when reporting $\bb_i$ (resp. $\textbf{z}$). Similarly, let $n(\bb_i)$ (resp. $n(\textbf{z})$) be the total number of identities that are allocated when reporting $\bb_i$ (resp. $\textbf{z}$).

\textbf{Case 1} $j(\textbf{b}_i)=0$ and $j(\textbf{z})\geq1$. 
Suppose that no identity is allocated when reporting $\bb_i$, then, in this case, the utility is zero. Now, when reporting $\textbf{z}$, at most the first identity will have access to the public good, otherwise $z_2\geq f(n(\textbf{z}))/n(\textbf{z})$. As $b_1\geq b_2\geq z_2$, then the first two identities would also be allocated when reporting $\bb_i$, leading to a contradiction. Since the mechanism is incentive compatible, the utility reporting $\textbf{z}$ is greater than zero, proving this case. 

\textbf{Case 2} $j(\textbf{b}_i)\geq 1$, $j(\textbf{z})=0$ and $v\leq 1$. Then, we have the last sybil identity that has access holds $b_j\geq f(n(\textbf{b}_i))/n(\textbf{b}_i)$. On the other hand, since no Sybil identity has access when reporting $\textbf{z}$, it holds that $z_{j(\textbf{b}_i)}<f(n(\textbf{b}_i))/n(\textbf{b}_i)$ and so $v/j(\textbf{b}_i)<f(n(\textbf{b}_i))/n(\textbf{b}_i)$. Since the payment when reporting $\bb_i$ is $j(\textbf{b}_i)f(n(\textbf{b}_i))/n(\textbf{b}_i)$, the utility when reporting $\bb_i$ is negative, and is zero when reporting $\textbf{z}$.

\textbf{Case 3} $j(\textbf{b}_i)\geq 1$, $j(\textbf{z})=0$ and $v>1$. It is not possible since the first bid reported by the agent $i$ is $v>1$ and so that identity has access to the public good, since $v>f(1)/1\geq f(n)/n$ for all $n\in\mathbb N$, leading to $j(\textbf{z})\geq1$.

\textbf{Case 4} $j(\textbf{b}_i)\geq 1$ and $j(\textbf{z})\geq1$. In this case, wlog, we can assume that $b_1\geq v$ and so $\bb_i\geq\textbf{z}$. By construction and the strong-monotonicity of the Shapley value mechanism, if a Sybil identity is allocated when reporting $\textbf{z}$, then the same identity is allocated when the report is $\bb_i$. Therefore, $j(\textbf{b}_i)\geq j(\textbf{z})$ and also $n(\textbf{b}_i)\geq n(\textbf{z})$. If $j(\textbf{b}_i)= j(\textbf{z})$, then the utility of both cases is the same. So, lets assume that $j(\textbf{b}_i)> j(\textbf{z})$. This implies that $z_{j(\textbf{b}_i)}<f(n(\bb_i))/n(\bb_i)$, and so $v/j(\textbf{b}_i)<f(n(\bb_i))/n(\bb_i)$ (or $1/j(\textbf{b}_i)<f(n(\bb_i))/n(\bb_i)$ if $v\geq1)$). On the other hand, $z_{j(\textbf{z})}\geq f(n(\textbf{z}))/n(\textbf{z})$, and so $v/j(\textbf{z})\geq f(n(\textbf{z}))/n(\textbf{z})$ (or $1/j(\textbf{z})\geq f(n(\textbf{z}))/n(\textbf{z})$ in case $v\geq1$) . Using both equations, in both cases, we deduce that $j(\textbf{b}_i)f(n(\bb_i))/n(\bb_i)>j(\textbf{z})f(n(\textbf{z}))/n(\textbf{z})$. Now, the utility is $u_i(\textbf{b}_i,\bb_{-i}) = v-j(\textbf{b}_i)f(n(\bb_i))/n(\bb_i)$ and $u_i(\textbf{z},\bb_{-i}) = v-j(\textbf{z})f(n(\textbf{z}))/n(z)$ since both have access to the public good and the payments for each Sybil is $f(n(\bb_i))/n(\textbf{b}_i)$ (resp. $f(n(\textbf{z}))/n(\textbf{z})$). And so, we deduce the result.
\item Observe that the space $\mathcal A$ has structure of a metric space with the metric $d(x,y) = \sqrt{\sum_{i=1}^\infty |x_i-y_i|^2}$. To prove the proposition, first we prove the following claim: there is a chain of sets $C_i$ such that $\cup_{i\in\mathbb N} C_i = \mathcal A_i$ and $B(v)\cap C_i$ is sequentially compact for every $v\in\mathbb R_+$. Take the set $C_i = \mathbb R_+^i\times\{0\}^\mathbb N\subseteq \bigoplus_{i=j}^\infty\mathbb R_+$. Clearly the set $B(v)\cap C_i = \bigoplus_{l=1}^i [0,v/l]\times\{0\}^\mathbb N$ is compact since is the product of compact spaces. Therefore, by lemma \ref{lemma:strict}, there exists a strictly dominant strategy set $\overline{B(v)}$.
\item Given bid vector profile $\textbf{b}\in\prod_{i=1}^n\overline{B(v_i)}$ with components $\textbf{b}_i$, consider the element $\textbf{z}_i$ as defined in $1)$. Since $\overline{B(v_i)}$ are strictly dominant strategy sets and $\textbf{z}_i\geq_{u_i} \textbf{b}_i$, we have that $\textbf{z}_i=_{u_i} \textbf{b}_i$. We will see that the set of allocated sybils is the same for the bid vector profiles $\textbf{b}$ and $\textbf{z}$. We know that $\textbf{z}_i =_{u_i}\textbf{b}_i$, and we claim that it implies that for every $j\in\mathbb N$, there exists $n_j$ such that $f(n_j)/n_j\leq\textbf{z}_{ij},\textbf{b}_{ij}< f(n_j-1)/(n_j-1)$. Suppose not, take the minimum $j$ such that the claim does not hold and so, there exists $n\in\mathbb N$ such that $f(n)/n\leq\textbf{z}_{ij}< f(n-1)/(n-1)\leq\textbf{b}_{ij}$. Since $\textbf{z}_{i1}=\textbf{b}_{i1}=v$, we have that $j\geq2$. Consider now the bid vector profile $\textbf{b}_{-i} = \underbrace{(f(n-1)/(n-1),...,f(n-1)/(n-1))}_{n-j-1 \text{ components}}$. By definition of $\textbf{z}_i$ and the previous inequality, we have that $\textbf{z}_{ij}=\min\{v/j,1/j\}$ and so, we deduce that $jf(n-1)/(n-1)>v\geq j f(n)/n$ if $v\leq 1$ and $jf(n-1)/(n-1)>1\geq  jf(n)/n$ otherwise. Now, let's compute the utility of the bid vector profiles $(\textbf{z}_i,\textbf{b}_{-i})$ and $(\textbf{b}_i,\textbf{b}_{-i})$. The allocation of the vector $(\textbf{z}_i,\textbf{b}_{-i})$ if $v<1$ is null, and if $v\geq1$ is just the Sybils of player $i$ inducing in the first case utility of $0$ and in the second case has utility of $v-f(1)$. 

When reporting $\textbf{b}_i$ all elements $\textbf{b}_{il}$ for $1\leq l\leq k$ are greater than $f(n-1)/(n-1)$, therefore, all the identities reported in the bid vector $\textbf{b}_{-i}$ are allocated. Therefore, his payment is at least $j\frac{f(n-1)}{n-1}$ and so, his utility is, at most, $v-j\frac{f(n-1)}{n-1}$. When  $v\leq 1$, we have that $v-j\frac{f(n-1)}{n-1}<0$ since $jf(n-1)/(n-1)>v$ and for $v>1$, we have that $v-f(1)>v-j\frac{f(n-1)}{n-1}$ since $jf(n-1)/(n-1)>1=f(1)$. This contradicts the claim that $\textbf{z}_{u_i}=\textbf{b}_i$, and so for every $j\in\mathbb N$, there exists $n_j$ such that $f(n_j)/n_j\leq\textbf{z}_{ij},\textbf{b}_{ij}< f(n_j-1)/(n_j-1)$. Now, clearly the allocation of both bid vector profiles is the same since the Shapley value mechanism consists of choosing the biggest set $S$ such that all bids are greater or equal $f(|S|)/|S|$.
\end{enumerate}

\end{document}